\documentclass[8pt]{article}
\usepackage{geometry}               
\geometry{letterpaper}                 
\usepackage{epstopdf}
\usepackage{amsthm}
\usepackage{amsfonts}
\usepackage{MnSymbol}
\usepackage{hyperref}
\usepackage{color}
\usepackage{tikz}
\usepackage{graphicx}
\usepackage[affil-it]{authblk}
\usepackage[applemac]{inputenc} 
\usepackage[toc,page]{appendix}

%% Theorem commands

\newtheorem{theorem}{Theorem}[section]

\newtheorem{proposition}[theorem]{Proposition}
\newtheorem{corollary}[theorem]{Corollary}

\newtheorem{remark}[theorem]{Remark}

%% Environement commands

\newcommand{\bea}{\begin{eqnarray}}
\newcommand{\eea}{\end{eqnarray}}
\def\beaa{\begin{eqnarray*}}
\def\eeaa{\end{eqnarray*}}
\def\ba{\begin{array}}
\def\ea{\end{array}}
\def\be#1{\begin{equation} \label{#1}}
\def \eeq{\end{equation}}
\def\bsplit{\begin{split}}
\newcommand{\nn}{\nonumber}

%% Miscelaneous commands

\def\les{\lesssim}
\def\ges{\gtrsim}
\def\c{\cdot}

\def\tr{\mbox{tr}}

\newcommand{\nabb}{\nab\mkern-13mu /\,}

\def\ntrap{trap\mkern-18 mu\big/\,}

%% Dot commands

\def\Db{\dot{\D}}
\def\squared{\dot{\square}}

%% Derivatives

\renewcommand{\div}{\mbox{div }}

\newcommand{\lapp}{\mbox{$\bigtriangleup  \mkern-13mu / \,$}}

\def\nab{\nabla}

\def\pr{\partial}

\def\dkb{ \, \mathfrak{d}     \mkern-9mu /}
\def\dk{\mathfrak{d}}

%% star  Commands 

%% Greek Letters

\def\a{\alpha}
\def\b{\beta}

\def\de{\delta}

\def\ep{\epsilon}
\def\la{\lambda}
\def\La{\Lambda}

\def\Si{\Sigma}

\def\th{{\theta}}

\def\ka{\kappa}

\def\Up{\Upsilon}

%% Greek letters var

\def\vphi{{\varphi}}

%% Underline commands

\def\kab{\underline{\kappa}}

%% mathcal letters come in double caps

\def\AA{{\mathcal A}}
\def\BB{{\mathcal B}}
\def\CC{{\mathcal C}}

\def\EE{{\mathcal E}}

\def\HH{{\mathcal H}}
\def\II{{\mathcal I}}

\def\LL{{\mathcal L}}
\def\MM{{\mathcal M}}

\def\PP{{\mathcal P}}
\def\QQ{{\mathcal Q}}

\def\TT{{\mathcal T}}

%% Bold-face letters come in capitals

\def\D{{\bf D}}

\def\F{{\bf F}}

\def\M{{\bf M}}

\def\Z{{\bf Z}}

\def\g{{\bf g}}

%% Underline letters (have a b)

\def\CCb{\underline{\CC}}

%% Overline notations

%%  Blackboard bold letters come in triples

%% Mathfrak commands

\def\qf{\frak{q}}

\def\ff{\frak{f}}

%%tilde commands

%% check notations

\def\Rc{\check R}

%% Commands for commutation formulas

\newcommand{\piX}{\,^{(X)}\pi}

%commands  for  spacetime quanitities

\def\rhoF{\,^{(F)} \hspace{-2.2pt}\rho}

\def\c{\cdot}
\def \f12{\frac 1 2 }

\def\Db{\dot{\D}}
\def\squared{\dot{\square}}

\def\err{\mbox{Err}}

\def\Mor{\mbox{Mor}}
\def\Morr{\mbox{Morr}}

\def\ec{\check{e}}

\def\Fdot{ \dot{F}}

\def\MMdot{\dot{\MM}}

\def\err{\mbox{Err}}

\def\Rbrev{\breve{R}}
        \def\Tbrev{\breve{T}}
        \def\thbrev{\breve{\th}}
        \def\ntrap{trap\mkern-18 mu\big/\,}
        \def\Ed{\dot{E}}

\begin{document}
\title{\large{\textbf{COUPLED GRAVITATIONAL AND ELECTROMAGNETIC PERTURBATIONS OF REISSNER-NORDSTR{\"O}M SPACETIME\\
 IN A POLARIZED SETTING \\
II - Combined estimates for the system of wave equations}}}
\author{\normalsize{Elena Giorgi}}
\date{} %remove date\maketitle 
\maketitle
\begin{abstract}
In a previous paper on coupled gravitational and electromagnetic  perturbations of Reissner-Nordstr{\"o}m spacetime in a polarized setting (\cite{Elena}), we derived a system of wave equations for two independent quantities, one related to the Weyl curvature and one related to the Ricci curvature of the perturbed spacetime. We analyze here the system of coupled wave equations, deriving combined energy-Morawetz and $r^p$-estimates for the system, in the case of small charge.
\end{abstract}

\tableofcontents

\section*{Introduction}
 The final state conjecture in General Relativity for charged black holes states that the Kerr-Newman spacetime is stable under small perturbation of initial data as solutions to the Einstein-Maxwell equation:
\bea \label{Einsteineq}
\operatorname{Ric}(g)_{\mu\nu}=T(F)_{\mu\nu}:=2 F_{\mu \lambda} {F^{\lambda}}_\nu - \frac 1 2 g_{\mu\nu} F^{\a\b} F_{\a\b}
\eea
where $F$ is a $2$-form satisfying Maxwell's equations 
\bea \label{Max}
\nabla_{[\a} F_{\b\gamma]}=0, \qquad \nabla^\a F_{\a\b}=0.
\eea
In the case of charged black holes, gravitational and electromagnetic perturbations have to be considered together, and they are coupled one another. One step towards the proof of the non-linear stability of Kerr-Newman spacetime is the linear stability of the spherically symmetric charged black holes, namely the Reissner-Nordstr{\"o}m solution, corresponding to Kerr-Newman spacetime with zero angular momentum. For a brief review of previous results on stability of the Einstein and the Einstein-Maxwell's equation see the introduction in \cite{Elena}, and references therein.

In the present paper, we continue the project started in \cite{Elena} in the context of stability of Reissner-Nordstr{\"o}m spacetime under polarized perturbations. In \cite{Elena}, we derived a coupled system of wave equations for two independent quantities $\qf$ and $\qf^\F$, related to the Weyl curvature and to the Ricci curvature of the perturbed spacetime respectively. We recall here the main definitions of Section 6.1 of \cite{Elena}. The main quantities, related to the Weyl curvature and to the Ricci curvature of the polarized electrovacuum spacetime, form a hierarchy of five quantities in the following way: 
\bea\label{quantities}
\begin{split}
\psi_0 &= r^2 \kab^2 \a, \\
\psi_1&=\underline{P}(\psi_0), \\
\psi_2&=\underline{P}(\psi_1)=\underline{P}(\underline{P}(\psi_0))=:\qf, \\
\psi_3&= r^2 \kab \ \ff, \\
\psi_4&= \underline{P}(\psi_3)=:\qf^\F
\end{split}
\eea
where the operator $\underline{P}$ is a rescaled null derivative in the ingoing direction $e_3$ defined as 
\bea\label{definition-operator-P}
\underline{P}(f)&=&r \kab^{-1} e_3f +\frac 12 rf.
\eea
The quantities $\psi_2=\qf$ and $\psi_4=\qf^\F$ are of particular importance because they verify Regge-Wheeler type equations which are coupled one another. Notice that $\psi_0$ and $\psi_1$ are lower order terms with respect to $\qf$, in the sense that they can be obtained by $\qf$ integrating along the $e_3$ direction. Similarly, $\psi_3$ is a lower order term with respect to $\qf^\F$. We recall the main theorem in \cite{Elena}, summarizing the system of coupled wave equations.

\begin{theorem}[Theorem 6.12 in \cite{Elena}]\label{theorem-old-Elena} Let $(\M, \g, \Z)$ be an axially symmetric polarized spacetime solution of the Einstein-Maxwell equation, which is a $O(\ep)$-perturbation of Reissner-Nordstr{\"o}m spacetime. Denote its quasi-local charge by $e$. Then there exist $O(\ep^2)$-invariant quantities $\qf$ and $\qf^{\F}$ related to the Weyl curvature and to the Ricci curvature respectively that verify the following coupled system of wave equations, modulo $O(\ep^2)$ terms,
\bea\label{finalsystem}
\begin{cases}
\Big(\square_2+ \ka\kab-10\rhoF^2\Big)\qf= e\Bigg(\frac{2}{r}\lapp_2\qf^{\F}-\frac{2}{r^2}Q\qf^{\F}-\frac{2}{r^2}\ka\kab \underline{P}\qf^{\F} + \frac 1 r \left(5\ka\kab+8\rho+4\rhoF^2\right)\qf^{\F}\Bigg)+e(l.o.t.)_1, \\
\Big(\square_2+\ka\kab+3\rho\Big)\qf^{\F}=e \Bigg(-\frac {2}{ r^3} \qf\Bigg) +e^2 (l.o.t.)_2
\end{cases}
\eea
where $Q$ is a rescaled null derivatives in the outgoing direction defined as
\bea\label{definition-operator-Q}
Q(f)&=&r\kab   e_4f+\frac 1 2r \ka\kab f
\eea 
 and $(l.o.t.)_1$ and $(l.o.t.)_2$ are lower order terms with respect to $\qf$ and $\qf^{\F}$, explicitely,
\beaa
(l.o.t.)_1&=&-6\rho \psi_3+e\left(-\frac{4}{r^3}\psi_1-\frac{2}{r^2} \psi_0 \right)+e^2 \left( -\frac{20}{r^4} \psi_3\right), \\
(l.o.t.)_2&=&\frac{4}{r^3} \psi_3 
\eeaa
\end{theorem}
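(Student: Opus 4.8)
The strategy is to produce the two Regge-Wheeler equations in (\ref{finalsystem}) by applying the Chandrasekhar-type transformation encoded in the operator $\underline{P}$ of (\ref{definition-operator-P}) to the Teukolsky-type transport equations satisfied by the extreme curvature components, carrying the electromagnetic coupling along at each stage.

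First I would derive the base propagation equations for the extreme Weyl component $\a$ and the extreme electromagnetic component $\ff$. Linearizing the Bianchi identities of the Einstein-Maxwell system (\ref{Einsteineq}) about Reissner-Nordstr\"om gives a Teukolsky-type equation for $\a$ transported along $e_3$, and linearizing Maxwell's equations (\ref{Max}) gives the analogue for $\ff$. Because the background is charged, these two equations are already coupled at order $e$: the Bianchi equation for $\a$ acquires a contribution from the Maxwell field through the stress-energy $T(F)$, while the Maxwell equation for $\ff$ acquires a curvature contribution. Every genuinely quadratic contribution is set aside as an $O(\ep^2)$ remainder, and the invariance of $\a$ and $\ff$, hence of the derived quantities, under the residual gauge freedom is checked to this order.

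Next I would run the hierarchy (\ref{quantities}). For the electromagnetic quantity, a single application of $\underline{P}$ to $\psi_3=r^2\kab\ff$, commuted through the spin-$1$ Teukolsky operator, already converts the transport structure into the wave operator $\square_2$ with potential $\ka\kab+3\rho$, producing the second equation for $\psi_4=\qf^{\F}$. For the gravitational quantity one application of $\underline{P}$ to $\psi_0=r^2\kab^2\a$ yields an intermediate equation for $\psi_1$, and it is the second application, giving $\psi_2=\qf$, that produces the wave operator $\square_2$ with potential $\ka\kab-10\rhoF^2$. In both cases the central algebraic input is the commutator of $\underline{P}$ with $\square_2$ — equivalently with $e_4$ and with $\lapp_2$ — which I would compute from the background commutation relations $[e_3,e_4]$ and $[e_3,\lapp_2]$ together with the explicit Reissner-Nordstr\"om values and transport equations of $\ka,\kab,\rho,\rhoF$; the $r$-weights in (\ref{definition-operator-P}) are tuned precisely so that the unwanted first-order terms cancel and only the stated potential survives.

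Finally I would collect the coupling and close the system in terms of $\qf$ and $\qf^{\F}$ themselves. The order-$e$ cross terms of the base equations, once carried through the transformations, are re-expressed using the hierarchy (\ref{quantities}): the electromagnetic coupling carried by the $\a$-equation becomes the differential source on the right of the first equation — the angular piece $\frac{2}{r}\lapp_2\qf^{\F}$, the outgoing piece $-\frac{2}{r^2}Q\qf^{\F}$, the ingoing piece $-\frac{2}{r^2}\ka\kab\,\underline{P}\qf^{\F}$, and the zeroth-order terms — while the curvature coupling carried by the $\ff$-equation becomes the single algebraic source $-\frac{2}{r^3}\qf$. The asymmetry between a differential coupling in one direction and a purely algebraic one in the other reflects the spin-$2$ versus spin-$1$ character of $\qf$ and $\qf^{\F}$ and offers a consistency check. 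The terms $(l.o.t.)_1$ and $(l.o.t.)_2$ are exactly the residual images of $\psi_0,\psi_1,\psi_3$ that do not reassemble into $\qf$ or $\qf^{\F}$. The principal obstacle is the bookkeeping: commuting $\underline{P}$ through a coupled transport system (twice for the gravitational quantity), keeping the background-versus-perturbation split clean, separating the linear order-$e$ coupling from the $O(\ep^2)$ remainder, and confirming that $\qf$ and $\qf^{\F}$ remain invariant to quadratic order under the residual gauge. I would organize this by first proving a lemma recording $[\underline{P},\square_2]$ and the transport equation for each $\psi_i$, and only then assembling (\ref{finalsystem}).
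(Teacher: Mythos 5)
This theorem is not proved in the present paper at all: it is quoted verbatim from the companion paper \cite{Elena} (Theorem 6.12 there), so the only meaningful comparison is with that derivation, and your plan reproduces it in outline — coupled generalized Teukolsky-type equations for $\a$ and $\ff$ obtained by linearizing the Bianchi and Maxwell equations about Reissner-Nordstr{\"o}m, the $\underline{P}$-hierarchy \eqref{quantities} applied twice to the spin-2 quantity and once to the spin-1 quantity, a commutator lemma for $\underline{P}$ against $\square_2$ tuned so the first-order terms cancel, and a final collection of the $O(e)$ coupling and lower-order terms. Two caveats: the base equations are second-order wave (Teukolsky) equations rather than transport equations as you phrase it — a single application of the first-order operator $\underline{P}$ to a genuine transport equation could never produce $\square_2$, and what the transformation actually does is remove the bad first-order terms from an already second-order operator — and the theorem's real content is the precise potentials $\ka\kab-10\rhoF^2$, $\ka\kab+3\rho$ and coupling coefficients, all of which live in the bookkeeping you defer, so your proposal is a correct roadmap rather than a verification.
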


According to Theorem \ref{theorem-old-Elena}, we can write the system \eqref{finalsystem} in the following concise form:
\bea\label{schematic-system}
\begin{cases}
\Big(\square_2-V_1\Big)\qf= e\ C_1[\qf^\F]+eL_1[\qf^\F]+e^2 L_1[\qf]+N_1[\qf, \qf^\F], \\
\Big(\square_2-V_2\Big)\qf^{\F}=e\  C_2[\qf] +e^2 L_2[\qf^\F]+N_2[\qf, \qf^\F]
\end{cases}
\eea      
where
\bea\label{definition-main-coefficients}
 V_1&=&   - \ka\kab+10\rhoF^2, \\
 V_2&=&-\ka\kab-3\rho, \\
 C_1[\qf^\F]&=&\frac{2}{r}\lapp_2\qf^{\F}-\frac{2}{r^2}Q\qf^{\F}-\frac{2}{r^2}\ka\kab \underline{P}\qf^{\F} + \frac 1 r \left(5\ka\kab+8\rho+4\rhoF^2\right)\qf^{\F}, \\
 C_2[\qf]&=& -\frac {2}{ r^3} \qf, \\
 L_1[\qf]&=& -\frac{4}{r^3}\psi_1-\frac{2}{r^2} \psi_0, \\
 L_1[\qf^\F]&=& -6\rho \psi_3 -e^2 \frac{20}{r^4} \psi_3, \\
 L_2[\qf^\F]&=& \frac{4}{r^3} \psi_3
 \eea
The terms $C$s and $L$s are the linear terms of the equations, all multiplied by the charge of the spacetime, and the terms $N_1$ and $N_2$ are the quadratic, i.e. $O(\ep^2)$, error terms. In particular:
\begin{itemize}
 \item The terms $C_1$ and $C_2$ are the terms representing the coupling between the Weyl curvature and the Ricci curvature. In the wave equation for $\qf$, the coupling term $C_1=C_1[\qf^\F]$ is an expression in terms of $\qf^\F$ and in the wave equation for $\qf^\F$, the coupling term $C_2=C_2[\qf]$ is an expression in terms of $\qf$. 
 \item The terms $L_1$ and $L_2$ collect the lower order terms: in particular $L_1[\qf]$ are lower order terms with respect to $\qf$, i.e. $\psi_0$ and $\psi_1$, while $L_1[\qf^\F]$ and $L_2[\qf^\F]$ are lower terms with respect to $\qf^\F$, i.e. $\psi_3$. The index $1$ or $2$ denotes if they appear in the first or in the second equation.
 \end{itemize}

The aim of this paper is to prove the following theorem.

\begin{theorem}\label{main-theorem} Let $(\M, \g, \Z)$ be an axially symmetric polarized spacetime solution of the Einstein-Maxwell equation, which is a $O(\ep)$-perturbation of Reissner-Nordstr{\"o}m spacetime, and let $\psi_0$, $\psi_1$, $\qf$, $\psi_3$, $\qf^\F$ be its curvature terms as defined in \eqref{quantities}. Then energy boundedness and weighted $r^p$-estimates hold as in Theorem \ref{main-theorem-1}.
\end{theorem}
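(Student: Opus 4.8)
The plan is to treat the coupling terms $C_1$, $C_2$, together with the lower order terms $L_i$ and the quadratic errors $N_i$, as inhomogeneous right-hand sides, to prove the full hierarchy of vector-field estimates — energy boundedness, integrated local energy decay (Morawetz), and the weighted $r^p$-estimates — for each Regge--Wheeler type equation separately, and then to \emph{combine} the two resulting estimates into a single master inequality for a weighted norm $\Ek[\qf,\qf^\F]$. Because every coupling, lower order and error term carries a factor of the charge $e$ (or $e^2$), the cross-contributions produced in this process will be bounded by $e$ times the master norm, and can therefore be absorbed on the left for $e$ sufficiently small. The essential structural fact exploited throughout is that $\qf$ and $\qf^\F$ satisfy equations of the same Regge--Wheeler form on the \emph{same} Reissner--Nordstr\"om background, so that their trapping occurs at the same photon sphere and the degeneracies of their Morawetz bulks match. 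The genuinely lower order quantities $\psi_0,\psi_1$ (controlled by $\qf$) and $\psi_3$ (controlled by $\qf^\F$) are recovered by transport and Hardy estimates along the $e_3$-direction from the defining relations in \eqref{quantities}, and so contribute only lower order terms to the combined norm.

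First I would establish the combined energy estimate. Contracting each equation with the time-translation multiplier $T$ (modified by the redshift vector field near the non-degenerate horizon, which is available since small charge keeps the solution far from extremality), I obtain an energy identity whose bulk is governed by the potentials $V_1$, $V_2$ and by the coupling. The potential contributions are treated as in the uncoupled analysis; the new terms come from $e\,C_1[\qf^\F]$ and $e\,C_2[\qf]$. Since $C_2[\qf]=-2r^{-3}\qf$ is of zeroth order, its contribution is immediately bounded by $e$ times the energy. The term $C_1[\qf^\F]$ contains the first order null derivatives in $Q\qf^\F$ and $\Pb\qf^\F$ and the top order angular term $\frac 2 r\lapp_2\qf^\F$; integrating by parts in the angular variables, using the self-adjointness of $\lapp_2$, moves one derivative onto the multiplier, after which Cauchy--Schwarz bounds everything by $e$ times the first-order energies of both quantities. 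Adding the two identities and using the smallness of $e$ yields energy boundedness for $\Ek[\qf,\qf^\F]$.

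Next I would add the Morawetz estimate, obtained by contracting with a multiplier of the form $f\,\prr+\ldots$ whose radial weight degenerates at the common photon sphere, supplemented by a zeroth-order term chosen to render the bulk coercive modulo trapping away from the horizon, the horizon degeneracy being removed by the redshift multiplier. The coupling again generates cross-terms, but now the crucial observation is that the degenerate angular control of $\qf^\F$ furnished by its own Morawetz bulk is exactly what is needed to absorb the top order coupling $\frac 2 r\lapp_2\qf^\F$ appearing in the $\qf$-equation: the two trapping degeneracies coincide, so no uncompensated loss of derivatives occurs. Higher order estimates are obtained by commuting the system with $T$ and the angular Killing field, which preserve the Regge--Wheeler structure. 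All coupling contributions once more carry a factor $e$ and are absorbed into the combined energy-Morawetz norm for small charge.

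Finally, in the far region I would run the Dafermos--Rodnianski $r^p$-hierarchy, contracting each equation with the weighted multiplier $r^p(\ldots)$ and summing over the hierarchy. Here one must check that the $r$-weights of each coupling term are compatible with those of the hierarchy: the leading term $\frac 2 r\lapp_2\qf^\F$ decays like $r^{-1}$ and is borne by the angular flux on the right of the $r^p$-identity, the null-derivative terms in $C_1$ come with stronger $r^{-2}$ weights, and the $L_i$, $N_i$ terms are lower order. Summing the energy, Morawetz and $r^p$ estimates and closing a bootstrap on $\Ek[\qf,\qf^\F]$ using the smallness of $e$ delivers the statement of Theorem \ref{main-theorem-1}. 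The main obstacle is precisely the top order angular coupling $\frac 2 r\lapp_2\qf^\F$ in the first equation: being of the same differential order as the principal part, it cannot be treated as a genuine lower order term, and the argument hinges on combining the matched trapping structure of the two equations with the smallness of $e$ to prevent a derivative loss.
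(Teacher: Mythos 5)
Your overall architecture---separated energy-Morawetz and $r^p$ estimates for the two Regge--Wheeler equations with the coupling, lower order and quadratic terms placed on the right-hand side, transport estimates along $e_3$ recovering $\psi_0,\psi_1,\psi_3$ from \eqref{quantities}, and absorption of the $O(e)$ cross terms for small charge---matches Steps 1 and 3 of the paper's proof. But the trapping region, which you yourself flag as the main obstacle, is where your argument breaks: matched trapping degeneracies plus smallness of $e$ do \emph{not} suffice to absorb the top-order coupling. The multiplier for each equation necessarily contains $\Lambda T$, and the resulting trapped cross terms $\int_{\MM_{trap}} T(\qf)\cdot e\,\tfrac{2}{r}\lapp_2\qf^\F$, together with $\int_{\MM_{trap}} TT\qf^\F\cdot T(e\,C_2[\qf])$ and $\int_{\MM_{trap}} T\dkb\qf^\F\cdot \dkb(e\,C_2[\qf])$ from the commuted second equation, carry \emph{no} degenerate weight, whereas the Morawetz bulks control $|T\cdot|^2$ and $|\nabb\cdot|^2$ only with the weight $(r^2-3\varpi r+2e^2)^2$ vanishing at the photon sphere. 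No Cauchy--Schwarz with any weight $w$ can split a weight-one product into two degenerately weighted bulks (one would need $d^{-1}\le w\le d$ with $d\to 0$ at $r=r_P$), and smallness of $e$ does not rescue this: the alternative bound by time-integrated slice energies grows linearly in $\tau_2-\tau_1$. Your angular integration by parts merely trades $T\qf\cdot\lapp_2\qf^\F$ for $\nabb\qf\cdot T\nabb\qf^\F$, degenerate in \emph{both} factors and equally unabsorbable.

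The paper's resolution is an exact algebraic cancellation, not an absorption. The trapped $T$-multiplier terms are kept with their signs (this is precisely why the norm $\II_p$ deliberately omits $\int_{\Si_{trap}}|T\Psi||\M|$); the $0$th-order estimate for $\qf$ is combined with the $1$st-order estimate for $\qf^\F$ commuted with \emph{all} of $T$, $R$, $\dkb$ (your commutation with only $T$ and the angular Killing field is not enough); one takes the specific combination $2\times\eqref{final-estimate1}+\eqref{final-estimate2}$, integrates the $\dkb$-term by parts twice, and substitutes the wave equation, with the renormalization $\tilde{T}=\frac{r}{\Up^{1/2}}T$, to convert $TT\qf^\F$ into $r^2\lapp_2\qf^\F$ plus bulk-controlled terms. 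The three resulting contributions of the form $\frac{1}{r}\lapp_2\qf^\F\cdot T(\qf)$ then cancel identically, reflecting the antisymmetric (diagonalizable) structure of the top-order system $\square_2\qf=e\frac{2}{r^3}\dkb^2\qf^\F$, $\square_2\qf^\F=-e\frac{2}{r^3}\qf$; see Propositions \ref{absorption-coupling-terms-1}--\ref{absorption-coupling-terms-2} and Remark \ref{diagonalizable}. Without this cancellation---which requires the precise coefficients and the sign-preserving bookkeeping that your proposal discards by passing to absolute values and Cauchy--Schwarz---the combined estimate does not close, however small the charge.
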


As remarked in Appendix A of \cite{Elena}, the system \eqref{finalsystem} is also valid for non-polarized electrovacuum perturbations of Reissner-Nordstr{\"o}m spacetime. Therefore, Theorem \ref{main-theorem} is a fundamental step in the proof of linear stability of Reissner-Nordstr{\"o}m spacetime.

The paper is organized as follows. In Section \ref{Section2}, we define the main weighted energies and bulks used in the estimates.
The energy boundedness and weighted $r^p$-estimates for the system are obtained in Section \ref{section-combined-estimates}. The strategy to obtain the combined estimates is the following:
\begin{enumerate}
\item {\bf Separated estimates} Consider the two equations of the system separately as 
\bea\label{first-equation}
\Big(\square_2-V_1\Big)\qf= \M_1[\qf, \qf^\F]
\eea
and 
\bea\label{second-equation}
\Big(\square_2-V_2\Big)\qf^{\F}=\M_2[\qf, \qf^\F]
\eea
where the linear terms of $\M_1[\qf, \qf^\F]$ and $\M_2[\qf, \qf^\F]$ are $O(e)$. More precisely
\beaa
\M_1[\qf, \qf^\F]&=&e\ C_1[\qf^\F]+eL_1[\qf^\F]+e^2 L_1[\qf]+N_1[\qf, \qf^\F], \\
\M_2[\qf, \qf^\F]&=&e\  C_2[\qf] +e^2 L_2[\qf^\F]+N_2[\qf, \qf^\F].
\eeaa
 We apply to both equations separately the standard procedures used to derive energy-Morawetz estimates to equation of the forms $\square_\g \Psi=V\Psi+\M[\Psi]$. Using the Morawetz vector field as multiplier and the $r^p$-method, we derive separated Morawetz estimates and $r^p$-weighted estimates, \eqref{estimate1}-\eqref{estimate1-2}, as long as higher derivative estimates, \eqref{estimate2}-\eqref{estimate2-2}, for the equations \eqref{first-equation} and \eqref{second-equation}. We will summarize the results we need in Section \ref{section-separated-estimates}, omitting the complete proofs for the sake of the clarity of the derivation of the combined estimates. The proofs will appear in \cite{Elena-Future}. 
 
 Notice that these separated estimates will contain on the right hand side terms involving $\M_1$ and $\M_2$ that at this stage are not controlled.  In particular $\M_1$ and $\M_2$ contain both the coupling terms $C$ and the lower order terms $L$. We will not deal with the non-linear terms $N_1$ and $N_2$ in this paper.
\item {\bf Absorption of the coupling terms} The coupling term $C_1[\qf^\F]$ in the first equation involves up to second derivative of $\qf^\F$. On the other hand, the coupling term $C_2[\qf]$ in the second equation contains no derivative of $\qf$. In order to take into account the difference in the presence of derivatives, we consider the $0$th-order estimate for equation \eqref{first-equation} and the $1$st-order estimate for equation \eqref{second-equation} and we add them together.  This operation will create a combined estimate, where the Morawetz bulks on the left hand side of each equations will absorb the coupling term on the right hand side of the other equation, as in Proposition \ref{absorption-coupling-terms-1}. In the trapping region, this absorption is delicate because of the degeneracy of the bulk norms. Neverthless, the special structure of the coupling terms $C_1[\qf^\F]$ and $C_2[\qf]$ implies a cancellation of problematic terms in the trapping region, as in Proposition \ref{absorption-coupling-terms-2}. This is done in Section \ref{section-combined-estimates}.
\item  {\bf Absorption of the lower order terms} To absorb the lower order terms $L_1[\qf]$, $L_1[\qf^\F]$ and $L_2[\qf^\F]$ in the combined estimate, we derive transport estimates for $\psi_0$, $\psi_1$ and $\psi_3$. We make use of the differential relation \eqref{quantities} to get non-degenerate energy estimates in Proposition \ref{transport-estimates}. Using these estimates, we will be able to control the norms involving the lower terms, as done in Proposition \ref{estimates-lot}. 
 This is done in Section \ref{section-lower-order}.
\end{enumerate}
Summing the separated estimates and absorbing the coupling terms and the lower order terms on the right hand side we obtain a combined estimate for the system as in Theorem \ref{main-theorem-1}. 

The presence of the lower order terms in the equations is treated as for the Teukolsky equation in Kerr spacetime, as previously done in \cite{Siyuan} and \cite{TeukolskyDHR}. On the other hand, the absorption of the coupling terms is proper of the electrovacuum case, with coupled and independent gravitational and electromagnetic perturbations.

\bigskip

\textbf{Acknowledgements} The author is grateful to Sergiu Klainerman and Mu-Tao Wang for comments and suggestions, and to Pei-Ken Hung for many helpful discussions.

\addcontentsline{toc}{section}{Introduction}

\section{Preliminaries and definition of main weighted quantities}\label{Section2}
We consider an electrovacuum spacetime $\MM$ which is a $O(\ep)$-perturbation of Reissner-Nordstr{\"o}m spacetime, as defined in Definition 4.1 of \cite{Elena}. Suppose that the original Reissner-Nordstr{\"o}m spacetime is given by 
\beaa
\g_{RN}(m_0, Q_0)&=& -\left(1-\frac{2m_0}{r}+\frac{Q_0^2}{r^2}\right) dt^2+\left(1-\frac{2m_0}{r}+\frac{Q_0^2}{r^2}\right)^{-1}dr^2 +r^2(d\th^2+\sin^2\th d\vphi^2)\\
&=& -\Up(m_0, Q_0) dt^2+\Up(m_0, Q_0)^{-1}dr^2 +r^2(d\th^2+\sin^2\th d\vphi^2)
\eeaa
with initial mass $m_0$ and initial charge $Q_0$, where $\Up(m_0, Q_0)=1-\frac{2m_0}{r}+\frac{Q_0^2}{r^2}$. In particular the horizon and the photon sphere of the unperturbed spacetime are respectively given by
 \beaa
r_{\mathcal{H}}(m_0, Q_0)=m_0+\sqrt{m_0^2-Q_0^2}, \qquad r_P(m_0, Q_0)=\frac{3m_0+\sqrt{9m_0^2-8Q_0^2}}{2}
\eeaa
Defining the modified Hawking mass $\varpi$ and the quasi-local charge $e$ of the perturbed spacetime $\MM$ as in Section 3 of \cite{Elena}, then $|\varpi-m_0| \le \ep_0 $ and  $|e-Q_0| \le \ep_0 $, for a small $\ep_0>0$,  through the entire evolution. For the perturbed spacetime we can define $\Up:=1-\frac{2\varpi}{r}+\frac{e^2}{r^2}$, using the quasi-local mass $\varpi$ and charge $e$.

As in the construction of the perturbed spacetime in \cite{stabilitySchwarzschild}, we assume that the spacetime $\MM$ contains a timelike hypersurface $\TT$ external to the horizon that separates the spacetime in two regions, such that $\MM=\MM_1 \cup \MM_2$, with $\MM_2$ bounded.
The spacetime  $\MM$  is foliated  by  $2$-spheres  $S$  and   comes equipped  with two   scalar functions $r$ and $u$  with $u $ optical   in $\MM_1$ normalized   on   the last incoming  slice  $\CCb_*$, and $r$ being  the area radius   
 of the spheres $S$.
 
We divide  $\MM$  in   the following  regions:
             \begin{enumerate}
             \item The  red shift region   $\MM_{red}=\MM_2$:
             \beaa
             r\in [r_A:=r_{\mathcal{H}}(m_0, Q_0)(1- \de_0), \  r_{\mathcal{H}}(m_0, Q_0)(1+  \de_0)]
             \eeaa
          for a small $\de_0>0$.  
             \item Trapping region  $\MM_{trap} $:
             \beaa
              \frac{5}{6} r_P(m_0, Q_0)\le r \le  \frac{7}{6}r_P(m_0, Q_0)
             \eeaa
             \item  The far region $\MM_{far}$:
             \beaa
               r  \ge R_0
             \eeaa
             with  $R_0$  a fixed  number  $R_0\gg \frac{7}{6}r_P(m_0, Q_0)$.             
             \end{enumerate}
             
   For fixed $R$ we denote by $\MM_{\le R}$ and  $\MM_{\ge R} $ the   regions  defined by $r\le R$ and $r\ge R$. We denote by $\MM_{\ntrap}$ any region outside the trapping region $\MM_{trap}$.

                We foliate   our spacetime   domain $\MM$    by      $\Z$ invariant  hypersurfaces  $\Si(\tau) $ which are:
             \begin{enumerate}
             \item  Incoming null    in $\MM_{red}$,        
              with $e_3$ as null incoming   generator. We denote  this portion 
              $\Si_{red}(\tau)$. 
             \item   Strictly spacelike  in $\MM_{trap} $. We denote  this  portion   by $\Si_{trap}$. 
                          \item     Outgoing null    in  $\MM_{far}=\MM_{r\ge R_0}$
             with $e_4$ as null  outgoing generator.   We denote this portion by $\Si_{far}$ or $\Si_{\ge  R_0}$.
               We can in  fact assume  that $\Si_{r\ge R_0}$          are the level hypersurfaces  of the   optical function $u$.
             \end{enumerate}

We introduce the following vectorfields, used in the derivation of the estimates:
\begin{itemize}
\item $T=\frac 12 (e_4+\Up e_3)$, 
\item $R=\frac 1 2(e_4-\Up e_3)$, 
\item The redshift vectorfield $Y_\HH$ supported in the region $|\Up|\leq 2\de_\HH$.
\item   Let $\th$  a smooth bump  function   equal $1$ on  $  |\Up|\le\de_\HH $  vanishing for  $|\Up|\ge2 \de_\HH$ and define the vectorfields, 
      \bea
       \label{eq:Rc-Tc}
       \begin{split}
  \Rbrev&:=  \th  \frac 1 2 ( e_4-e_3) +(1-\th)        \Up^{-1} R=    \frac 1 2 \left[\thbrev e_4- e_3\right] \\
  \Tbrev&:=\th  \frac 1 2 ( e_4+e_3) +(1-\th)        \Up^{-1} T= \frac 1 2 \left[ \thbrev e_4+ e_3\right]
  \end{split}
      \eea 
      where $\thbrev=\th+\Up^{-1} (1-\th)$.    Note that $2(   |\Rbrev\Psi|^2 +|\Tbrev \Psi|^2)= |e_3\Psi|^2 +\thbrev^2 |e_4\Psi|^2.$
      \item $\ec_4=e_4+\frac 1 r $

\end{itemize}

    We introduce the following  quantities for a symmetric traceless $S$-horizonthal tensor $\Psi$ in  regions    $\MM(\tau_1, \tau_2)\subset \MM$    in  the past of $\Si(\tau_2)$ and in the future of $\Si(\tau_1)$.
    \begin{enumerate}
    \item Energy quantities on $\Si(\tau)$:
    \begin{itemize}
    \item Basic energy quantity
\bea
\label{def:basic-energy}
E[\Psi](\tau)&=& \int_{\Si(\tau)}\bigg( \frac 1 2  (N_\Si, e_3)^2  \,    |e_4 \Psi|^2  +\frac 1 2 (N_\Si, e_4 )^2\,  |e_3\Psi|^2 +|\nabb\Psi|^2 + r^{-2}|\Psi|^2 \bigg)\nn\\
\eea
\item Weighted energy quantities in the far away region
\bea
 \bsplit
  \Ed_{p\,; \,R}[\Psi](\tau):&=  \int_{\Si_{\ge  R}(\tau)}  r^p |\ec_4\Psi|^2,\\
  \Ed'_{p\,; \,R}[\Psi](\tau):&=  \int_{\Si_{\ge  R}(\tau)}  r^p\left(  |\ec_4\Psi|^2+ r^{-2} |\Psi|^2 \right)
  \end{split}
  \eea
  \item Weighted energy quantities
  \bea
  \bsplit
  E_{p}[\Psi](\tau):&=E[\Psi](\tau)+ \Ed_{p\,; \,R}[\Psi](\tau)\\
   E'_{p}[\Psi](\tau):&=E[\Psi](\tau)+ \Ed'_{p\,; \,R}[\Psi](\tau)\
  \end{split}
  \eea

\end{itemize}
    \item  Morawetz bulk quantities in $\MM(\tau_1, \tau_2)$:
    \begin{itemize}
    \item Basic Morawetz bulk 
    \bea
    \label{def:Mor-bulk}
\Mor[\Psi](\tau_1, \tau_2):= \int_{\MM(\tau_1, \tau_2)} \frac{\varpi^2}{r^3} |\Rbrev(\Psi)|^2+ \frac{\varpi}{r^4} |\Psi|^2 + \frac{(r^2-3\varpi r +2e^2)^2}{r^5}\left( |\nabb \Psi|^2+\frac{\varpi^2}{r^2} |\Tbrev\Psi|^2 \right)     
\eea
\item Improved Morawetz bulk
\bea
\bsplit
\Morr[\Psi](\tau_1, \tau_2)&:=&\Mor[\Psi](\tau_1, \tau_2)+\int_{\MM_{far}(\tau_1, \tau_2)} r^{-1-\de}  |e_3(\Psi)|^2\\
&=&\int_{\MM_{trap}} |R(\Psi)|^2+ r^{-2} |\Psi|^2 + \frac{(r^2-3\varpi r +2e^2)^2}{r^4}\left( |\nabb \Psi|^2+\frac{D}{r^2} |T\Psi|^2 \right)\\
 &&\int_{\MM_{\ntrap}} r^{-3} \big(|e_4\Psi|^2+ r^{-1} |\Psi|^2\big) + r^{-1} |\nabb \Psi|^2+r^{-1-\de}  |e_3(\Psi)|^2
 \end{split}
\eea
\item  Weighted bulk norms in the far away region 
\bea
\MMdot_{p\,; \,R}[\Psi](\tau_1, \tau_2):&=\int_{\MM_{\ge  R}(\tau_1, \tau_2) }  r^{p-1}  \left( p | \ec_4(\Psi) |^2 +(2-p)   |\nabb \Psi|^2+   r^{-2}  |\Psi|^2\right)  
\eea
\item Weighted bulk norms
\bea
\MM_{p}[\Psi](\tau_1, \tau_2):&=\Morr[\Psi](\tau_1, \tau_2)+\MMdot_{p\,; \,R}[\Psi](\tau_1, \tau_2)
\eea
\item Non-degenerate bulk norm
\bea
\hat{\MM}_p[\Psi](\tau_1, \tau_2):&=\int_{\MM(\tau_1, \tau_2)} r^{-3+p}|\Psi|^2+r^{-1+p}|e_4(\Psi)|^2+r^{-1+p}|\nabb\Psi|^2
\eea
\end{itemize}
\item Flux   through $\CCb_*(\tau_1, \tau_2)=\CCb_*\cap \MM(\tau_1, \tau_2) $:
\begin{itemize}
\item Basic flux quantity
\bea
  F[\Psi](\tau_1, \tau_2)&=  \int_{\CCb_*(\tau_1, \tau_2)}\big( |e_3\Psi|^2 + |\nabb \Psi|^2 + r^{-2} \Psi^2\big)
  \eea
  \item Weighted  Flux quantity
\bea
\bsplit
 \Fdot_p[\Psi] (\tau_1,\tau_1):&=   \int_{\CCb_*(\tau_1,\tau_2)}  r^p\left(  |\nabb\Psi|^2+ r^{-2} |\Psi|^2\right)\\
 \\
  F_p[\Psi] (\tau_1,\tau_1):&=F[\Psi] (\tau_1,\tau_1)          +    \Fdot_p[\Psi] (\tau_1,\tau_1)
 \end{split}
\eea
  \end{itemize}
  \item Norm for $\M[\Psi]$:  for $p\ge \de$,
\bea\label{definition-norm-M}
\II_{p}[\Psi, \M](\tau_1,\tau_2) :&=& \int_{\MM_{\ntrap}(\tau_1,\tau_2)} r^{1+p}  |\M|^2  +\int _{\MM_{trap}}  ( |R\Psi|  +r^{-1} |\Psi| ) |\M|
\eea
\item Quadratic error term:
\bea
     \err_{\ep,\de}(\tau_1,\tau_2)[\Psi]=O(\ep) \int_{\MM_{\ge R_0}     (\tau_1, \tau_2)}  r^{-1+\de} \left( |e_4\Psi|^2 +|\nabb\Psi|^2 +r^{-2}|\Psi|^2\right)\
\eea
    \end{enumerate}
    
    \begin{remark} Notice that the Morawetz bulk $\MM_{p}[\Psi]$ is degenerate in the trapping region. Notice that the norm for $\M[\Psi]$ given by $\II_{p}[\Psi, \M](\tau_1,\tau_2)$ does not contain the term $\int _{\tau_1}^{\tau_2}  d\tau \int_{\Si_{trap}(\tau)  }  |T\Psi| |\M| $. Indeed, this term has to be kept in the estimates with its sign, in order to be cancelled out in the analysis of the coupling terms.
    \end{remark}
   
   For each of these quantities, we define their higher derivative version. We denote $\dk=\{ e_3, r e_4, \dkb \}$, where $\dkb=r\nabb$ denotes angular derivative.\footnote{Notice the different weight in $r$ between $e_3$ and $e_4$: this is consistent with the asymmetric definitions of the norms.} We define 
   \begin{enumerate}
\item Higher derivative energies:
   \beaa
   E^s[\Psi]&=&\sum_{0\le k\le s} E[\dk^k\Psi], \qquad  \Ed^s_{p\,; \,R}[\Psi]= \sum_{0\le k\le s}\Ed_{p\,; \,R}[\dk^s\Psi], \qquad  \Ed^{'s}_{p\,; \,R}[\Psi]=\sum_{0\le k\le s}\Ed^{'}_{p\,; \,R}[\dk^s\Psi], \\
   E^s_{p}[\Psi]&=& E^s[\Psi]+\Ed^s_{p\,; \,R}[\Psi],  \qquad E^{'s}_{p}[\Psi]=E^s[\Psi]+\Ed^{'s}_{p\,; \,R}[\Psi], 
   \eeaa
   \item Higher derivative Morawetz bulks:
   \beaa
   \Mor^s[\Psi]&=&\sum_{0\le k\le s} \Mor[\dk^k\Psi], \qquad  \Morr^s[\Psi]=\sum_{0\le k\le s} \Morr[\dk^k\Psi], \qquad \MMdot^s_{p\,; \,R}[\Psi]=\sum_{0\le k\le s}\MMdot_{p\,; \,R}[\dk^s\Psi], \\
   \MM^s_{p}[\Psi]&=& \Morr^s[\Psi]+\MMdot^s_{p\,; \,R}[\Psi], \qquad \hat{\MM}^s_p[\Psi]=\sum_{0\le k\le s} \hat{\MM}_p[\dk^k\Psi]
   \eeaa
   \item Higher derivative fluxes:
   \beaa
   F^s[\Psi]&=&\sum_{0\le k\le s} F[\dk^k\Psi], \qquad  \Fdot^s_p[\Psi] = \sum_{0\le k\le s}\Fdot_p[\dk^k\Psi] , \qquad F^s_p[\Psi]=F^s[\Psi]+\Fdot^s_p[\Psi]
   \eeaa
   \item Higher derivative norm for $\MM[\Psi]$:
   \beaa
   \II^s_{p}[\Psi, \M]&=& \II_{p}[\dk^s \Psi, \dk^s\M] 
   \eeaa
   \item Higher derivative error term:
   \beaa
    \err^s_{\ep,\de}(\tau_1,\tau_2)[\Psi]&=& \sum_{0\le k\le s}\err_{\ep,\de}(\tau_1,\tau_2)[\dk^k\Psi] 
   \eeaa
\end{enumerate}
   
These quantities will appear in the energy-Morawetz and $r^p$ weighted estimates of the system \eqref{schematic-system}. 

In the estimates below we will denote $\AA\les \BB$ if there exists a constant $C$ independent of $\ep$ such that $\AA \le C \BB$.

\section{Energy estimates for the system of coupled wave equations}\label{section-combined-estimates}

In this section we prove the energy boundedness, weighted $r^p$-estimates and integrated energy decay for the system \eqref{schematic-system}.

\begin{theorem}[Energy-Morawetz and $r^p$ estimates for the coupled system for $\qf$ and $\qf^\F$]\label{main-theorem-1} Let $\psi_0$, $\psi_1$, $\qf$, $\psi_3$, $\qf^\F$ be curvature terms, as defined in \eqref{quantities}, of  an axially symmetric polarized spacetime which is a $O(\ep)$-perturbation of Reissner-Nordstr{\"o}m spacetime.  Then, there exist constants $C>0$, independent of $\ep$, and a fixed, arbitrarily small,  $\de>0$ with $0<\ep\ll \de$, such that, if $ e \ll \varpi $,  the following estimates hold true in $\MM(\tau_1,\tau_2)$:
\begin{itemize}
\item For $\de\le p\le 2-\de $, we have
       \bea\label{first-estimate-main-theorem}
       \begin{split}
 & \left(E^s_p[\qf] +    E^{s+1}_p[\qf^\F]\right)(\tau_2)       +\left(\MM^s_p[\qf]+\MM^{s+1}_p\right)(\tau_1, \tau_2)+   \left(F^s_p[\qf]+F^{s+1}_p[\qf]\right)(\tau_1, \tau_2)\\
 &\leq  C\Big(\left(E^s_p[\qf] +    E^{s+1}_p[\qf^\F]+E^s_p[\psi_0]+E^s_p[\psi_1]+E^s_p[\psi_2]\right)(\tau_1)   +\err\Big)
 \end{split}
  \eea
  \item For $\de\le p\le 1-\de $, we have 
  \bea\label{second-estimate-main-theorem}
  \begin{split}
   & \left(E'^s_p[\qf] +    E'^{s+1}_p[\qf^\F]\right)(\tau_2)       +\left(\MM^s_p[\qf]+\MM^{s+1}_p\right)(\tau_1, \tau_2)+   \left(F^s_p[\qf]+F^{s+1}_p[\qf]\right)(\tau_1, \tau_2)\\
 &\leq  C\Big(\left(E'^s_p[\qf] +    E'^{s+1}_p[\qf^\F]+E^s_p[\psi_0]+E^s_p[\psi_1]+E^s_p[\psi_2]\right)(\tau_1)   +\err\Big)
 \end{split}
    \eea
\end{itemize}
where $\err$ are quadratic terms given by 
\beaa
\err&=&\err_{\ep, \de}(\tau_1, \tau_2)[\qf]+\err^1_{\ep, \de}(\tau_1, \tau_2)[\qf^\F]+\II_p[\qf, N_1[\qf, \qf^\F]]( \tau_1,\tau_2)+\II^1_p[\qf^\F, N_2[\qf, \qf^\F]]( \tau_1,\tau_2)\\
&&-\int_{\MM_{trap}}\Lambda  T( \qf )\ N_1[\qf, \qf^\F]+\Lambda  T T \qf^\F \c T(N_2[\qf, \qf^\F] )+\Lambda  T R \qf^\F \c R(N_2[\qf, \qf^\F] )+\Lambda  T \nabb \qf^\F \c \nabb(N_2[\qf, \qf^\F] )
\eeaa
\end{theorem}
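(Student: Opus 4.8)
The plan is to follow the three-step strategy outlined in the introduction, combining separated estimates for each equation, then absorbing the coupling and lower order terms. First I would invoke the separated energy-Morawetz and $r^p$-estimates for each equation individually, treating $\eqref{first-equation}$ and $\eqref{second-equation}$ as wave equations of the form $\square_2 \Psi = V\Psi + \M[\Psi]$. These yield estimates controlling $E^s_p[\qf]$, $\MM^s_p[\qf]$, $F^s_p[\qf]$ in terms of the initial energy and the norm $\II_p[\qf, \M_1]$, and analogously for $\qf^\F$ at one higher derivative level, as announced in the cited estimates $\eqref{estimate1}$--$\eqref{estimate2-2}$. The key accounting point, motivated by the difference in derivative content of the coupling terms, is to take the $0$th-order estimate for the $\qf$-equation and the $1$st-order (i.e. $s+1$) estimate for the $\qf^\F$-equation; this is why the statement pairs $E^s_p[\qf]$ with $E^{s+1}_p[\qf^\F]$ throughout.

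Next I would add the two separated estimates together and handle the coupling terms $e\,C_1[\qf^\F]$ and $e\,C_2[\qf]$, which are the genuinely new feature of the electrovacuum system. Away from trapping, since $\MM_{\ntrap}$ has non-degenerate bulk, the coupling terms are absorbed by the left-hand bulks at the cost of a factor of $e$: because $e \ll \varpi$, the terms $e\,C_1$ and $e\,C_2$ feeding into $\II_p[\qf, \M_1]$ and $\II^1_p[\qf^\F, \M_2]$ can be bounded by a small multiple of $\MM^s_p[\qf] + \MM^{s+1}_p[\qf^\F]$ (using Cauchy-Schwarz and that $C_1$ involves at most two derivatives of $\qf^\F$, $C_2$ no derivative of $\qf$), and then hidden on the left. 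The delicate step is the trapping region, where the Morawetz bulks degenerate and the naive absorption fails. Here I would use the special algebraic structure of $C_1$ and $C_2$: the remark following $\eqref{definition-norm-M}$ already signals that the $\int_{\Si_{trap}} |T\Psi||\M|$ terms must be retained with their signs, and I expect that the $T$-multiplier contributions from $C_1[\qf^\F]$ in the first equation cancel against those from $C_2[\qf]$ in the second equation, precisely because $C_2[\qf] = -\frac{2}{r^3}\qf$ pairs against the $\frac{2}{r}\lapp_2 \qf^\F$ and $-\frac{2}{r^2}Q\qf^\F$ terms of $C_1$ after integration by parts on the trapped spheres. This cancellation is recorded in Propositions $\ref{absorption-coupling-terms-1}$ and $\ref{absorption-coupling-terms-2}$.

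Finally I would absorb the lower order terms $L_1[\qf]$, $L_1[\qf^\F]$, $L_2[\qf^\F]$, which involve $\psi_0, \psi_1, \psi_3$. These are not controlled by the wave estimates alone, so I would derive transport estimates for $\psi_0, \psi_1, \psi_3$ using the differential relations $\eqref{quantities}$: since each is obtained from the next by integrating along $e_3$ via the operator $\underline{P}$, a Grönwall-type argument along the ingoing direction gives non-degenerate energy bounds for $\psi_0, \psi_1, \psi_3$ in terms of $\qf, \qf^\F$ and their initial data (Proposition $\ref{transport-estimates}$), and these feed into the control of the $L$-terms (Proposition $\ref{estimates-lot}$). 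This accounts for the appearance of $E^s_p[\psi_0] + E^s_p[\psi_1] + E^s_p[\psi_2]$ on the right-hand side at the initial time $\tau_1$. Summing everything and collecting the quadratic $N_1, N_2$ contributions into the error term $\err$ of the statement yields the two claimed estimates $\eqref{first-estimate-main-theorem}$ and $\eqref{second-estimate-main-theorem}$, the two ranges of $p$ reflecting whether the primed (including the $r^{-2}|\Psi|^2$ weight in $\Ed'$) or unprimed far-region energy is propagated.

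I expect the main obstacle to be the trapping-region cancellation in the absorption of the coupling terms. Everywhere else the smallness of $e$ provides the needed slack, but at the photon sphere the degenerate bulk cannot absorb a $T$-derivative of $\qf^\F$ appearing in $C_1$, and one must exploit the exact coefficients of $C_1$ and $C_2$ — together with the choice of retaining the $T$-flux terms in $\II_p$ with their signs — to produce a genuine cancellation rather than a mere estimate. Verifying that this cancellation is robust under the $s+1$ versus $s$ derivative asymmetry, and that the second-order operator $\lapp_2$ in $C_1$ integrates by parts cleanly against $\nabb\qf^\F$ without leaving uncontrolled boundary contributions on $\Si_{trap}$, is the technical heart of the argument.
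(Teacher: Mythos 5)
Your overall architecture coincides with the paper's proof: you pair the $0$th-order separated estimate for $\qf$ (Theorem \ref{theorem-0th-estimate}) with the $1$st-order one for $\qf^\F$ (Theorem \ref{1st-order-estimates}), absorb the coupling terms outside trapping by Cauchy--Schwarz and the smallness $e\ll\varpi$, and control the lower order terms $\psi_0,\psi_1,\psi_3$ by transport along $e_3$; the paper's Proposition \ref{transport-estimates} implements the latter via divergence identities for $r^n|r\psi_i|^2e_3$ with a coercive sign rather than a Gr\"onwall argument, but this is the same mechanism and accounts, as you say, for the initial energies of the lower order quantities on the right-hand side.

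The genuine gap is in the trapping-region cancellation, which you leave at the level of an expectation and whose mechanism you misidentify. You propose that the cancellation arises ``after integration by parts on the trapped spheres,'' with $C_2[\qf]=-\frac{2}{r^3}\qf$ pairing against both $\frac{2}{r}\lapp_2\qf^\F$ and $-\frac{2}{r^2}Q\qf^\F$. First, the first-order terms $Q\qf^\F$, $\underline{P}\qf^\F$ of $C_1$ play no role in the cancellation: one derivative of $\qf^\F$ is controlled non-degenerately at trapping as the zeroth-order part of $\dk\qf^\F$ in the commuted bulk $\MM^1_p[\qf^\F]$, so those terms are absorbed outright. Second, and more seriously, angular integration by parts disposes only of the term $T\nabb\qf^\F\c\nabb\qf$, converting it into $\frac{2}{r}\lapp_2\qf^\F\c T\qf$ modulo energy boundary terms; it cannot touch the term $\frac{2}{r^3}TT\qf^\F\c T(\qf)$, in which both factors degenerate in the bulks and where integrating by parts in $T$ merely transfers the double time derivative onto $\qf$. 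The missing idea, which is the heart of Proposition \ref{absorption-coupling-terms-2}, is to use the second wave equation itself: after renormalizing $T$ to $\tilde T=\frac{r}{\Up^{1/2}}T$ in the trapping region (and commuting the $\qf^\F$-equation with this deformation), one substitutes $\tilde T\tilde T\qf^\F=r^2\lapp_2\qf^\F+\frac{r^2}{\Up}RR\qf^\F+2rR\qf^\F-r^2V_2\qf^\F+e\frac{2}{r}\qf-e^2\frac{4}{r}\psi_3$, so that the $TT$-term also produces $\frac{2}{r}\lapp_2\qf^\F\c T\qf$ plus terms controlled non-degenerately; the three contributions $-\frac{4}{r}T(\qf)\c\lapp_2\qf^\F+\frac{2}{r}\lapp_2\qf^\F\c T\qf+\frac{2}{r}\lapp_2\qf^\F\c T\qf$ then cancel exactly (see also Remark \ref{diagonalizable}: the higher-order system is diagonalizable, which is the structural reason behind this identity). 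Note moreover that the substitution generates the term $e^2\frac{4}{r}\psi_3$, which is why $\hat{\MM}[\psi_3]$ appears in estimate \eqref{coupling-terms-estimate2} and must in turn be absorbed by the transport estimates --- a coupling between your Step 2 and Step 3 that your outline does not anticipate. Without the equation substitution, the degenerate term $TT\qf^\F\c T\qf$ remains unestimated and your proposed proof does not close.
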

 
 \begin{proof} According to Theorem \ref{theorem-old-Elena}, the curvature quantities $\psi_0$, $\psi_1$, $\qf$, $\psi_3$, $\qf^\F$ of an $O(\ep)$-perturbation of Reissner-Nordstr{\"o}m spacetime verify the system \eqref{schematic-system}, composed of the two equations \eqref{first-equation} and \eqref{second-equation}.
We will proof estimate \eqref{first-estimate-main-theorem} in the case of $s=0$, and for higher derivatives the proof can be extended, relying on higher order derivative separated estimates. We summarize here the main steps to the proof, assuming the intermediate steps which will be proved in Sections \ref{section-separated-estimates}-\ref{section-lower-order}. We will consider the case of $\de\le p\le 2-\de$. The stronger estimate for $\de\le p\le 1-\de $ is treated in the same way. 
\begin{enumerate}
\item Treating the two equations separately, we derive energy-Morawetz and $r^p$-estimates with standard techniques, as summarized in Theorems \ref{theorem-0th-estimate} and \ref{1st-order-estimates}. The $0$th order estimate \eqref{estimate1} for $\qf$, with $\M_1[\qf, \qf^\F]=e\ C_1[\qf^\F]+eL_1[\qf^\F]+e^2 L_1[\qf]+N_1[\qf, \qf^\F]$, reads 
\bea\label{final-estimate1}
       \begin{split}
       & E_p[\qf](\tau_2)            +\MM_p[\qf](\tau_1,\tau_2)+    F_p[\qf](\tau_1, \tau_2)\\
        &\les E_{p}[\qf](\tau_1)+\left(\II_p[\qf, e  C_1[\qf^\F]]+\II_p[\qf, e   L_1[\qf^\F]]+ \II_p[\qf, e^2  L_1[\qf]]\right)(\tau_1, \tau_2)  \\
        & -\int_{\MM_{trap}}\Lambda  T( \qf )\c \left(e\ C_1[\qf^\F]+eL_1[\qf^\F]+e^2 L_1[\qf] \right) +   \err_0[\qf]
        \end{split}
  \eea
where $\err_0[\qf]:= \err_{\ep, \de}(\tau_1, \tau_2)[\qf]+\II_p[\qf, N_1[\qf, \qf^\F]]( \tau_1,\tau_2)-\int_{\MM_{trap}}\Lambda  T( \qf )\ N_1[\qf, \qf^\F]$ are quadratic terms.

The $1$-st order estimate \eqref{estimate2} for $\qf^\F$, with $\M_2[\qf, \qf^\F]=e\  C_2[\qf] +e^2 L_2[\qf^\F]+N_2[\qf, \qf^\F]$ is
 \bea\label{final-estimate2}
       \begin{split}
      E^1_p[\qf^\F](\tau_2)            +\MM^1_p[\qf^\F](\tau_1,\tau_2)+    F^1_p[\qf^\F](\tau_1, \tau_2)&\les E^1_{p}[\qf^\F](\tau_1)+  \left(\II^1_p[\qf^\F, e  C_2[\qf]] + \II^1_p[\qf^\F, e^2  L_2[\qf^\F]]\right)(\tau_1, \tau_2)   \\
   &-\int_{\MM_{trap}}\Lambda  T T \qf^\F \c T(e\  C_2[\qf] +e^2 L_2[\qf^\F] )\\
   &-\int_{\MM_{trap}}\Lambda  T R \qf^\F \c R(e\  C_2[\qf] +e^2 L_2[\qf^\F])\\
   &-\int_{\MM_{trap}}\Lambda  T \dkb \qf^\F \c \dkb(e\  C_2[\qf] +e^2 L_2[\qf^\F] )+\err_1[\qf^\F]
   \end{split}
  \eea
where $\err_1[\qf^\F]:= \err^1_{\ep, \de}(\tau_1, \tau_2)[\qf^\F]+\II^1_p[\qf^\F, N_2[\qf, \qf^\F]]( \tau_1,\tau_2)-\int_{\MM_{trap}}\Lambda  T T \qf^\F \c T(N_2[\qf, \qf^\F] )+\Lambda  T R \qf^\F \c R(N_2[\qf, \qf^\F] )+\Lambda  T \nabb \qf^\F \c \nabb(N_2[\qf, \qf^\F] )$ are quadratic terms.

This is done in Section \ref{section-separated-estimates}.

We sum a multiple of the first estimate to the second estimate. In particular, taking $2 \times \eqref{final-estimate1}+ \eqref{final-estimate2}$, and keeping the constant explicit for the integrals in the trapping region on the right hand side we obtain
\bea
       \begin{split}\label{summed-estimates}
      &  \left(E_p[\qf]+E^1_p[\qf^\F]\right)(\tau_2)            +\left(\MM_p[\qf]+ \MM^1_p[\qf^\F]\right)(\tau_1,\tau_2)+   \left( F_p[\qf] +    F^1_p[\qf^\F]\right)(\tau_1, \tau_2)\\
      &\les \left(E_p[\qf]+E^1_p[\qf^\F]\right)(\tau_1)+\err_0[\qf] +\err_1[\qf^\F]\\
      &+\left(\II_p[\qf, e  C_1[\qf^\F]]+\II^1_p[\qf^\F, e  C_2[\qf]]\right)(\tau_1, \tau_2)-\int_{\MM_{trap}}\Lambda  T R \qf^\F \c R(e\  C_2[\qf] )  \\
      &-e \Lambda \int_{\MM_{trap}}2  T( \qf )\c  C_1[\qf^\F]+ T T \qf^\F \c T( C_2[\qf])+  T \dkb \qf^\F \c \dkb( C_2[\qf]) \\
        &+\left(\II_p[\qf, e   L_1[\qf^\F]]+ \II_p[\qf, e^2  L_1[\qf]]+\II^1_p[\qf^\F, e^2  L_2[\qf^\F]]\right)(\tau_1, \tau_2)  \\
                 & -e \Lambda \int_{\MM_{trap}}2 T( \qf )\c \left(L_1[\qf^\F]+e L_1[\qf] \right) - e T T \qf^\F \c T( L_2[\qf^\F] )-  e T R \qf^\F \c R( L_2[\qf^\F])- e  T \nabb \qf^\F \c \nabb( L_2[\qf^\F] )
        \end{split}
  \eea
The third and fourth line of estimate \eqref{summed-estimates} concern the coupling terms, while the last two lines contain the lower order terms, and they are all $O(e)$. We will be able to absorb completely the coupling terms by the left hand side of the estimate for small charge and using a cancellation, as shown in Step 2. We will absorb the lower order terms modulo a necessary bound on initial energy for the lower order quantities $\psi_0$, $\psi_1$, $\psi_3$, as done in Step 3.

\item Because of the structure of the coupling terms, we are able to absorb most of those terms appearing on the right hand side of estimate \eqref{summed-estimates} by the Morawetz bulks and the energies of $\qf$ and $\qf^\F$ on the left hand side. In particular, outside the trapping region, this absorption is straightforward for all terms. In the trapping region though, there are problems due to the degeneracy of the Morawetz bulks, in $T$ and $\nabb$. Specifically, the terms $\int_{\MM_{trap}}2  T( \qf )\c  C_1[\qf^\F]- T T \qf^\F \c T( C_2[\qf])-  T \nabb \qf^\F \c \nabb( C_2[\qf]) $ cannot be aabsorbed by the left hand side. Neverthless, due to the particular structure of the higher order coupling terms,  which is related to the fact that the system composed with higher order terms is diagonalizable, these terms get cancelled out (see Remark \ref{diagonalizable}). In Proposition \ref{absorption-coupling-terms-1} we obtain 
\bea\label{coupling-terms-estimate1}
\begin{split}
&\left(\II_p[\qf, e  C_1[\qf^\F]]+\II^1_p[\qf^\F, e  C_2[\qf]]\right)(\tau_1, \tau_2)-e \Lambda \int_{\MM_{trap}}  T R \qf^\F \c R( C_2[\qf] ) \\
&\les e \sup_{\tau\in[\tau_1, \tau_2]} \left( E[\qf](\tau)+ E^1[\qf^\F](\tau)\right)+e \left(\MM_p[\qf]+\MM^1_p[\qf^\F]\right)(\tau_1, \tau_2)
\end{split}
\eea 
and in Proposition \ref{absorption-coupling-terms-2} we obtain 
\bea\label{coupling-terms-estimate2}
\begin{split}
&-e \Lambda \int_{\MM_{trap}}2  T( \qf )\c  C_1[\qf^\F]+ T T \qf^\F \c T( C_2[\qf])+  T \dkb \qf^\F \c \dkb( C_2[\qf]) \\
&\les  e \sup_{\tau\in[\tau_1, \tau_2]} \left( E[\qf](\tau)+ E^1[\qf^\F](\tau)\right)+e\left(\MM_p[\qf]+\MM^1_p[\qf^\F] + \hat{\MM}[\psi_3]\right)(\tau_1, \tau_2)
\end{split}
\eea

This is done in Section \ref{section-coupling}. 

\item In order to control the lower order terms of the equations, we use the differential relations existing between the quantities $\psi_0, \psi_1, \qf, \psi_3, \qf^\F$. We derive in Proposition \ref{transport-estimates} transport estimates for $\psi_0, \psi_1, \psi_3$ and in Proposition \ref{estimates-lot} we estimate the norms involving the lower order terms. We arrive to the following estimate in Corollary \ref{corollary-lot}: 
\bea\label{absorb-lot}
\begin{split}
&\left(\II_p[\qf, e   L_1[\qf^\F]]+ \II_p[\qf, e  L_1[\qf]]+\II^1_p[\qf^\F, e^2  L_2[\qf^\F]]\right)(\tau_1, \tau_2)  \\
                 & -e \Lambda \int_{\MM_{trap}}2 T( \qf )\c \left(L_1[\qf^\F]+e L_1[\qf] \right) - e T T \qf^\F \c T( L_2[\qf^\F] )-  e T R \qf^\F \c R( L_2[\qf^\F])- e  T \nabb \qf^\F \c \nabb( L_2[\qf^\F] )\\
&\les e\sup_{\tau\in[\tau_1, \tau_2]} \left(E[\qf](\tau)+E^1[\qf^\F](\tau)\right)+e \MM^1_p[\qf^\F](\tau_1, \tau_2)+ \left(E_p[\psi_0]+E_p[\psi_1]+E_p[\psi_3]\right)(\tau_1)
                 \end{split}
\eea
This is done in section \ref{section-lower-order}.
\end{enumerate}

Back to estimate \eqref{summed-estimates}, using \eqref{coupling-terms-estimate1} and \eqref{coupling-terms-estimate2} for the coupling terms and \eqref{absorb-lot} for the lower order terms, we obtain
\bea
       \begin{split}
      &  \left(E_p[\qf]+E^1_p[\qf^\F]\right)(\tau_2)            +\left(\MM_p[\qf]+ \MM^1_p[\qf^\F]\right)(\tau_1,\tau_2)+   \left( F_p[\qf] +    F^1_p[\qf^\F]\right)(\tau_1, \tau_2)\\
      &\les \left(E_p[\qf]+E^1_p[\qf^\F]+E_p[\psi_0]+E_p[\psi_1]+E_p[\psi_3]\right)(\tau_1)+\err\\
      &+e \sup_{\tau\in[\tau_1, \tau_2]} \left( E[\qf](\tau)+ E^1[\qf^\F](\tau)\right)+e\left(\MM_p[\qf]+\MM^1_p[\qf^\F] \right)(\tau_1, \tau_2) 
        \end{split}
  \eea
where the $\hat{\MM}_p[\psi_3]$ appearing in the coupling term was absorbed using the transport estimates. For $e \ll \varpi$, we can absorb the energies and the bulks of $\qf$ and $\qf^\F$ on the left hand side,
which proves \eqref{first-estimate-main-theorem}. The proof of \eqref{second-estimate-main-theorem} is identical.
\end{proof}

In what follows, we prove the estimates in the intermediate steps of the proof of Theorem \ref{main-theorem-1}.

\subsection{Energy estimates for the two equations separately}\label{section-separated-estimates}
The derivation of the Morawetz and $r^p$-weighted estimates for equations \eqref{first-equation} and \eqref{second-equation} is based on the following Proposition. 
\begin{proposition}
\label{prop:qf-tensorial'}
Consider  symmetric  traceless          S-horizonthal   tensor
 $\Psi_{AB}$   and $\M[\Psi]_{AB}$ verifying the spacetime equation
\bea
\label{eq:tensor-waveqf}
\squared_\g \Psi=V\Psi +\M[\Psi]
\eea
Define  the energy momentum tensor associated to the equation \eqref{eq:tensor-waveqf} as
\bea\label{defenergymomentum}
 \QQ_{\mu\nu}:&=&\Db_\mu  \Psi \c \Db _\nu \Psi 
          -\frac 12 \g_{\mu\nu} \left(\Db_\la \Psi\c\Db^\la \Psi + V\Psi \c \Psi\right)\\
             &=&\Db_\mu  \Psi \c \Db _\nu \Psi    -\frac 12 \g_{\mu\nu}  \LL(\Psi)
          \eea
       Let $X= a(r) e_3+b(r) e_4$ be a vectorfield, $w$ a scalar  function and $M$  a one form. Defining
 \beaa
 \PP_\mu[X, w, M]&=&\QQ_{\mu\nu} X^\nu +\frac 1 2  w \Psi \Db_\mu \Psi -\frac 1 4\Psi^2   \pr_\mu w +\frac 1 4 \Psi^2 M_\mu,
  \eeaa
 then, in an axially symmetric polarized spacetime, 
  \bea
  \label{le:divergPP-gen}
  \begin{split}
  \D^\mu  \PP_\mu[X, w, M] &= \frac 1 2 \QQ  \c\piX - \frac 1 2 X( V )  \Psi\c \Psi+\frac 12  w \LL[\Psi] -\frac 1 4\Psi^2   \square_g  w\\
   &+\frac 1 4  \Db^\mu (\Psi^2 M_\mu)      +  \left(X( \Psi )+\frac 1 2   w \Psi\right)\c \M[\Psi] 
   \end{split}
 \eea
\begin{proof} Standard computations.
\end{proof}

\end{proposition}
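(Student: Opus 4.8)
The plan is to treat this as the standard Lagrangian current computation for the tensorial wave equation \eqref{eq:tensor-waveqf}, organized around first computing the divergence of $\QQ$ and then distributing the divergence over the four pieces of $\PP_\mu[X,w,M]$, using the equation to eliminate the $V$-contributions and the symmetry of $\QQ$ to produce the deformation-tensor term.

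First I would compute $\D^\mu \QQ_{\mu\nu}$. Writing $\QQ_{\mu\nu}=\Db_\mu\Psi\c\Db_\nu\Psi-\f12\g_{\mu\nu}\LL(\Psi)$ with $\LL(\Psi)=\Db_\la\Psi\c\Db^\la\Psi+V\Psi\c\Psi$, the Leibniz rule gives $\D^\mu\QQ_{\mu\nu}=\squared_\g\Psi\c\Db_\nu\Psi+\Db_\mu\Psi\c\Db^\mu\Db_\nu\Psi-\f12\D_\nu\LL(\Psi)$. Expanding $\D_\nu\LL(\Psi)=2\Db^\la\Psi\c\Db_\nu\Db_\la\Psi+(\pr_\nu V)\Psi\c\Psi+2V\Psi\c\Db_\nu\Psi$ and commuting the covariant derivatives in $\Db_\mu\Psi\c\Db^\mu\Db_\nu\Psi$ to match $\Db^\la\Psi\c\Db_\nu\Db_\la\Psi$, the quadratic second-derivative terms cancel and one is left with $\D^\mu\QQ_{\mu\nu}=\squared_\g\Psi\c\Db_\nu\Psi-V\Psi\c\Db_\nu\Psi-\f12(\D_\nu V)\Psi\c\Psi$. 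Substituting $\squared_\g\Psi=V\Psi+\M[\Psi]$ collapses the $V\Psi\c\Db_\nu\Psi$ terms and yields the clean divergence identity $\D^\mu\QQ_{\mu\nu}=\Db_\nu\Psi\c\M[\Psi]-\f12(\D_\nu V)\Psi\c\Psi$.

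Next I would differentiate $\PP_\mu$ term by term. For the first term, $\D^\mu(\QQ_{\mu\nu}X^\nu)=(\D^\mu\QQ_{\mu\nu})X^\nu+\QQ_{\mu\nu}\D^\mu X^\nu$; by symmetry of $\QQ$ the second piece equals $\f12\QQ\c\piX$, while the first piece, using the divergence above, gives $X(\Psi)\c\M[\Psi]-\f12 X(V)\Psi\c\Psi$. For the second term I would use $\D^\mu\big(w\,\Psi\c\Db_\mu\Psi\big)=(\pr^\mu w)\,\Psi\c\Db_\mu\Psi+w\big(\Db^\mu\Psi\c\Db_\mu\Psi+\Psi\c\squared_\g\Psi\big)$ and replace $\Psi\c\squared_\g\Psi$ by $V\Psi\c\Psi+\Psi\c\M[\Psi]$, so that $\Db^\mu\Psi\c\Db_\mu\Psi+V\Psi\c\Psi$ reconstitutes $\LL[\Psi]$. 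The third term gives $-\f12(\Psi\c\Db^\mu\Psi)\pr_\mu w-\f14\Psi^2\square_g w$ after using $\D^\mu(\Psi^2)=2\Psi\c\Db^\mu\Psi$, and the fourth term stays in divergence form $\f14\D^\mu(\Psi^2 M_\mu)$.

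The key observation is that the contribution $\f12(\pr^\mu w)\,\Psi\c\Db_\mu\Psi$ from the second piece cancels exactly against $-\f12(\Psi\c\Db^\mu\Psi)\pr_\mu w$ from the third; collecting the surviving source contributions into $\big(X(\Psi)+\f12 w\Psi\big)\c\M[\Psi]$ then reproduces precisely \eqref{le:divergPP-gen}. The only delicate point, beyond the bookkeeping, is the tensorial commutation used in the divergence of $\QQ$: swapping $\Db^\mu\Db_\nu\Psi$ to $\Db_\nu\Db^\mu\Psi$ produces curvature terms through the Ricci identity, and I would verify that in the axially symmetric polarized setting these are exactly the contributions already absorbed in the dotted operators $\Db$ and $\squared_\g$, so that no extra curvature terms survive in the contracted expression. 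This is where the polarization hypothesis enters and is the single step requiring genuine care.
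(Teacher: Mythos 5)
Your proposal is correct and is precisely the ``standard computations'' the paper invokes without writing out: the divergence identity $\D^\mu\QQ_{\mu\nu}=\Db_\nu\Psi\c\M[\Psi]-\frac 12 (\D_\nu V)\Psi\c\Psi$, the symmetrization $\QQ_{\mu\nu}\D^\mu X^\nu=\frac 12 \QQ\c\piX$, the reconstruction of $\LL[\Psi]$ from the $w$-term via the equation, and the cancellation of the $(\pr^\mu w)\,\Psi\c\Db_\mu\Psi$ contributions all check out and reproduce \eqref{le:divergPP-gen} term by term. You also correctly isolate the one genuinely delicate point --- the Ricci-identity commutator terms arising from swapping $\Db^\mu\Db_\nu\Psi$ and $\Db_\nu\Db^\mu\Psi$ for the projected derivatives on $S$-horizontal tensors, which must be checked to vanish (or be $O(\ep^2)$) in the axially symmetric polarized setting --- which is exactly what the paper's one-line proof silently subsumes.
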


We summarize in this section the energy-Morawetz and the $r^p$-weighted estimates for the two equations separately. For both equations of the system \eqref{schematic-system}, we can derive higher derivative energy-Morawetz and higher weighted $r^p$ estimates. Nevertheless, because of the structure and dependence of derivatives of $\M_1[\qf, \qf^\F]$ and $\M_2[\qf, \qf^\F]$, we will need to combine the $0$th order estimate for the first equation \eqref{first-equation} with the $1$st order estimate for the second equation \eqref{second-equation}. In general we will need to combine the $s$-th order estimate for the first with the $(s+1)$-th order estimate for the second. 

To obtain the cancellation that will be needed in the absorption of the coupling terms, we keep a part of the term $\left(X( \Psi )+\frac 1 2   w \Psi\right)\c \M[\Psi] $ as it is in the estimates, with its sign, and we bound the rest of it with the norm $\II_p[\Psi, \M]$.

\begin{theorem}[$0$th order energy-Morawetz and $r^p$ weighted estimates for $\qf$]\label{theorem-0th-estimate} Consider the equation $$\square_\g\qf=V_1\qf+ \M_1[\qf, \qf^\F]$$ in a $O(\ep)$-perturbation of Reissner-Nordstr{\"o}m spacetime, where $V_1=   - \ka\kab+10\rhoF^2$. Then there exists a fixed, arbitrarily small,  $\de>0$ with $0<\ep\ll \de$, and $\Lambda>\de^{-1}$ such that the following estimates hold true in $\MM(\tau_1,\tau_2)$:
\begin{itemize}
\item For $\de\le p\le 2-\de $, we have
       \bea\label{estimate1}
       \begin{split}
        E_p[\qf](\tau_2)            +\MM_p[\qf](\tau_1,\tau_2)+    F_p[\qf](\tau_1, \tau_2)&\les E_{p}[\qf](\tau_1)+  \II_p[\qf, \M_1[\qf, \qf^\F]] +   \err_{\ep, \de}(\tau_1, \tau_2)[\qf] \\
        & -\int_{\MM_{trap}}\Lambda  T( \qf )\c \M_1[\qf, \qf^\F] 
        \end{split}
  \eea
  \item For $\de\le p\le 1-\de $, we have 
  \bea\label{estimate1-2}
  \begin{split}
   E'_p[\qf](\tau_2)            +\MM_p[\qf](\tau_1,\tau_2)+    F_p[\qf](\tau_1, \tau_2)&\les E'_{p}[\qf](\tau_1)+  \II_p[\qf, \M_1[\qf, \qf^\F]] +   \err_{\ep, \de}(\tau_1, \tau_2)[\qf] \\
        & -\int_{\MM_{trap}}\Lambda  T( \qf )\c \M_1[\qf, \qf^\F]   
        \end{split}
    \eea
\end{itemize}
\end{theorem}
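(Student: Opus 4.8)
The plan is to derive both estimates from the single divergence identity \eqref{le:divergPP-gen} of Proposition \ref{prop:qf-tensorial'}, applied with $\Psi=\qf$, $V=V_1$ and $\M=\M_1[\qf,\qf^\F]$, by choosing a multiplier triple $(X,w,M)$, integrating over $\MM(\tau_1,\tau_2)$ and invoking the divergence theorem. I would build $X$ by superposing four pieces, each responsible for one feature of the estimate: (i) a large multiple $\Lambda\,T$ of the stationary vectorfield $T$, whose deformation tensor $\piT$ is $O(\ep)$ in the perturbed spacetime, which produces the coercive energy fluxes $E_p[\qf]$ on the slices $\Si(\tau_1),\Si(\tau_2)$ together with the flux through $\CCb_*$; (ii) a Morawetz radial multiplier of the form $F(r)\Rbrev$, together with the scalar $w$ and one-form $M$, tuned so that all the bulk terms on the right of \eqref{le:divergPP-gen} except the last (inhomogeneous) one combine into a coercive density reproducing $\Mor[\qf]$, with the photon-sphere degeneracy carried by the factor $(r^2-3\varpi r+2e^2)^2$; (iii) the redshift vectorfield $Y_\HH$ supported near the horizon, needed to recover the $e_3$-derivative in the region where $T$ becomes null; and (iv) in $\MM_{far}=\MM_{\ge R_0}$ the weighted multiplier $X=r^p\ec_4$ with a compatible $w$, producing the weighted energy $\Ed_{p;R}[\qf]$ on $\Si_{far}$, the weighted flux $\Fdot_p[\qf]$, and the weighted bulk $\MMdot_{p;R}[\qf]$.

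The coefficients in $\MMdot_{p;R}[\qf]$, namely $p|\ec_4\qf|^2+(2-p)|\nabb\qf|^2+r^{-2}|\qf|^2$, are nonnegative precisely when $\de\le p\le 2-\de$, which fixes the range in estimate \eqref{estimate1}; the stronger slice energy $\Ed'_{p;R}[\qf]$ carries the extra term $r^{p-2}|\qf|^2$, whose control requires a Hardy inequality valid only for $p<1$, hence the restriction $\de\le p\le1-\de$ in \eqref{estimate1-2}. After summing the four currents and integrating, every contribution differing from its Reissner-Nordstr\"om value is $O(\ep)$ and, since $\ep\ll\de$, is absorbed into $\err_{\ep,\de}(\tau_1,\tau_2)[\qf]$; the choice $\Lambda>\de^{-1}$ guarantees that the large positive $T$-fluxes dominate the possibly wrong-signed boundary contributions of the Morawetz and redshift currents.

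It then remains to treat the inhomogeneous term $\big(X(\qf)+\frac12 w\qf\big)\c\M_1[\qf,\qf^\F]$. Outside the trapping region the Morawetz density is non-degenerate, so after Cauchy-Schwarz these terms are controlled by $\int_{\MM_{\ntrap}}r^{1+p}|\M_1|^2$ plus a small multiple of the bulk reabsorbed on the left, which is exactly the first piece of $\II_p[\qf,\M_1]$. Inside $\MM_{trap}$, the Morawetz and redshift parts of $X$ involve only the non-degenerate quantities $|R\qf|$ and $r^{-1}|\qf|$, and are absorbed into the second piece $\int_{\MM_{trap}}(|R\qf|+r^{-1}|\qf|)|\M_1|$ of $\II_p[\qf,\M_1]$. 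The single piece that cannot be absorbed is the $\Lambda T$-contribution $-\int_{\MM_{trap}}\Lambda\,T(\qf)\c\M_1[\qf,\qf^\F]$: in the trapping region the bulk degenerates in the $T$ and $\nabb$ directions, so no coercive quantity controls $T(\qf)$ there. As stressed in the remark following the bulk definitions, this term is deliberately kept on the right-hand side with its sign, to be cancelled later against the analogous $\qf^\F$-term when the two equations are combined.

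I expect the main obstacle to be the construction in step (ii): choosing $F(r)$, $w$ and $M$ so that the Morawetz bulk is coercive away from the photon sphere while carrying exactly the degeneracy $(r^2-3\varpi r+2e^2)^2$ at $r=r_P$, and simultaneously keeping all errors generated by the $O(\ep)$ perturbation of the potential $V_1=-\ka\kab+10\rhoF^2$ and of the connection coefficients small enough to be swallowed by $\err_{\ep,\de}$. This is the standard but delicate trapping analysis; in the present paper it only needs to be invoked, and its full execution is deferred to \cite{Elena-Future}.
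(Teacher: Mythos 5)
Your proposal follows essentially the same route as the paper's proof: the same application of Proposition \ref{prop:qf-tensorial'} with the superposed multiplier triplet (radial Morawetz piece with scalar $w$ and one-form $M$, the redshift vectorfield $Y_\HH$, the large multiple $\La T$, and the weighted far-away multiplier $f_p e_4$ with $f_p=\theta(r)r^p$), the same coercivity bounds for the bulk, boundary and flux terms, and the same treatment of the inhomogeneous term --- isolating $-\int_{\MM_{trap}}\La\, T(\qf)\c \M_1[\qf,\qf^\F]$ with its sign and absorbing everything else into $\II_p[\qf,\M_1]$ via Cauchy-Schwarz. The paper's own proof is likewise a sketch deferring the detailed trapping construction to \cite{Elena-Future}, so your writeup matches it in both structure and level of detail.
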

\begin{proof} We sketch here the proof, to appear in \cite{Elena-Future}. We apply Proposition \ref{prop:qf-tensorial'} to equation $\square_\g \qf=V_1\qf +\M_1[\qf, \qf^\F]$ with triplet
\beaa
   (X, w, M):=  (f _\de R,  w_\de, 2 h R  )  +\ep_\HH( Y_\HH, 0, 0)  + (0, 0, 2 h \Rc)  + (\Lambda T, 0, 0)+(f_p e_4,  \frac{2f_p}{r}, \frac{2 f_p'}{r} e_4)
  \eeaa
   for well-chosen functions $f_\de$, $w_\de$, $h$, $f_p=\theta(r) r^p$ and well chosen constants $\ep_\HH$, $\Lambda$. Defining 
     \beaa
 \EE[X, w, M](\qf)&:=& \D^\mu  \PP_\mu[X, w, M] -   \left(X( \qf )+\frac 1 2   w \qf\right)\c \M_1[\qf, \qf^\F]   
  \eeaa
  applying divergence theorem we obtain
    \bea
    \label{eq:boundariesMM}
    \bsplit
             & \int_{r= r_A} \PP \c N_A + \int_{\Si_2}\PP\c N_\Si +\int_{\MM(\tau_1, \tau_2)} \EE        +\int_{\CCb_*} \PP \c   e_3  = \int_{\Si_1}\PP\c N_\Si- \int_{\MM(\tau_1, \tau_2)}  \left(X( \qf )+\frac 1 2   w \qf\right)\c \M_1[\qf, \qf^\F]  
              \end{split}
              \eea
              where      $\EE=  \EE[X, w, M](\qf)$  and             $N_A$  is the unit normal to the spacelike hypersurface  $r=r_A$. 
            We can show that,  for  $C\gg \de^{-1}$ independent of $\ep$,
              \bea
       \label{equation:LBforEE}
       \begin{split}
        \int_{\MM(\tau_1, \tau_2)}  \EE&\ge C^{-1}  \MM_p[\qf]( \tau_1, \tau_2)+O\big(  \ep  \La      \big)   \err_{\ep, \de}[\qf](\tau_1,\tau_2), \\
        \int_{r=r_A}  \PP\c   N_A &\ge 0, \\
         \int_{\Si(\tau)}   \PP\c N_\Si &\ges    E_p[\Psi](\tau), \\
          \int_{\CCb_*(\tau_1, \tau_2)}   \PP\c e_3  &\ge    \frac{\La}{4}F[\Psi](\tau_1, \tau_2)
        \end{split}
       \eea
We analyze the inhomogeneous term $ - \int_{\MM(\tau_1, \tau_2)}  \left(X( \qf )+\frac 1 2   w \qf\right)\c \M_1[\qf, \qf^\F]$. Since the vectorfield $X$ is given by    $X=  f_\de   R +\ep_\HH Y_\HH +\La T+f_p e_4$, and $f_p$ is supported in $r\ge R$, we separate the term involving $T\qf$ and bound the other ones with their absolute value:
\bea\label{inhomogeneous-term}
\begin{split}
- \int_{\MM(\tau_1, \tau_2)}  \left(X( \qf )+\frac 1 2   w \qf\right)\c \M_1[\qf, \qf^\F]&\le - \int_{\MM(\tau_1, \tau_2)}  \Lambda T(\qf )\c \M_1[\qf, \qf^\F]+C\int_{\MM(\tau_1,\tau_2)}\left(  |\Rbrev \qf| + r^{-1} |\qf|\right)|\M_1[\qf, \qf^\F]|\\
&+C\int_{\MM_{\ge R}(\tau_1,\tau_2)  } r^p |e_4\qf||\M_1[\qf, \qf^\F]|
\end{split}
\eea
We separate the first and second integral on the right of \eqref{inhomogeneous-term} in the integral in the trapping region and outside the trapping region. Outside the trapping region, the integral $- \int_{\MM(\tau_1, \tau_2)}  \Lambda T(\qf )\c \M_1[\qf, \qf^\F]$ can also be bounded by its absolute value:
\bea
\begin{split}
- \int_{\MM(\tau_1, \tau_2)}  \left(X( \qf )+\frac 1 2   w \qf\right)\c \M_1[\qf, \qf^\F]&\les - \int_{\MM_{trap}}  \Lambda T(\qf )\c \M_1[\qf, \qf^\F]+\int_{\MM_{trap}}\left(  |\Rbrev \qf| + r^{-1} |\qf|\right)|\M_1[\qf, \qf^\F]|\\
&+\int_{\MM_{\ntrap}}\left(  |\Rbrev \qf| +|\Tbrev \qf|+ r^{-1} |\qf|\right)|\M_1[\qf, \qf^\F]|\\
&+\int_{\MM_{\ge R}(\tau_1,\tau_2)  } r^p |e_4\qf||\M_1[\qf, \qf^\F]|
\end{split}
\eea
The integrals outside the trapping region the integral can be separated using Cauchy-Schwarz:
         \beaa
\int_{\MM_{\ntrap}(\tau_1,\tau_2)}  ( |\Rbrev\qf| +  |\Tbrev\qf|  +r^{-1} |\qf| ) |\M_1[\qf, \qf^\F]| &\le & \la \int_{\MM_{\ntrap}}  r^{-1-\de}  ( |\Rbrev\qf|^2 + |\Tbrev\qf|^2    +r^{-2} |\qf|^2 )+\la^{-1}  \int_{\MM_{\ntrap}}  r^{1+\de} |\M_1[\qf, \qf^\F]|^2 , \\
\int_{\MM_{\ge R}(\tau_1,\tau_2)  } r^p |e_4\qf||\M_1[\qf, \qf^\F]|&\le&  \lambda \int_{\MM_{\ge R}(\tau_1,\tau_2)  } r^{p-1} |e_4 \qf|^2+\lambda^{-1} \int_{\MM_{\ge R}(\tau_1,\tau_2)  } r^{p+1} |\M_1[\qf, \qf^\F]|^2
         \eeaa
    and for $\lambda$ small enough the first  integrals on the right can be absorbed in the Morawetz bulk $\MM_p[\qf]$.
    We thus arrive to  the estimate,
         \beaa
              E_p[\qf](\tau_2)            +\MM_p[\qf](\tau_1,\tau_2)+F_p[\qf](\tau_1,\tau_2)&\les&     E_p[\qf](\tau_2)  +\err_\ep(\tau_1, \tau_2)- \int_{\MM_{trap}}  \Lambda T(\qf )\c \M_1[\qf, \qf^\F]\\
              &&+\int _{\MM_{trap}}  ( |\Rbrev\qf|   +r^{-1} |\qf| ) |\M_1[\qf, \qf^\F]|+\int_{\MM_{\ntrap}}  r^{1+p} |\M_1[\qf, \qf^\F]|^2
              \eeaa
             and  recalling the definition of $\II_p[\qf,\M_1[\qf, \qf^\F]]$ in \eqref{definition-norm-M}, we obtain the desired estimate. 
 
             \end{proof}

With the same techniques, we derive the estimates for $\qf^\F$, independently of the ones for $\qf$. 

\begin{theorem}[$1$st order energy-Morawetz and $r^p$ weighted estimates for $\qf^\F$]\label{1st-order-estimates} Consider the equation $$\square_\g\qf^\F=V_2\qf^\F+ \M_2[\qf, \qf^\F]$$ in a $O(\ep)$-perturbation of Reissner-Nordstr{\"o}m spacetime, where $V_2=   - \ka\kab-3\rho$. Then there exists a fixed, arbitrarily small,  $\de>0$ with $0<\ep\ll \de$, such that the following estimates hold true in $\MM(\tau_1,\tau_2)$:
\begin{itemize}
\item For $\de\le p\le 2-\de $, we have
       \bea\label{estimate2}
       \begin{split}
      E^1_p[\qf^\F](\tau_2)            +\MM^1_p[\qf^\F](\tau_1,\tau_2)+    F^1_p[\qf^\F](\tau_1, \tau_2)&\les E^1_{p}[\qf^\F](\tau_1)+  \II^1_p[\qf^\F,\M_2[\qf, \qf^\F]]( \tau_1,\tau_2) +   \err^1_{\ep, \de}(\tau_1, \tau_2)[\qf^\F]  \\
   &-\int_{\MM_{trap}}\Lambda  T T \qf^\F \c T(\M_2[\qf, \qf^\F] )\\
   &-\int_{\MM_{trap}}\Lambda  T R \qf^\F \c R(\M_2[\qf, \qf^\F] )\\
   &-\int_{\MM_{trap}}\Lambda  T \dkb \qf^\F \c \dkb(\M_2[\qf, \qf^\F] )
   \end{split}
  \eea
  \item For $\de\le p\le 1-\de $, we have 
  \bea\label{estimate2-2}
  \begin{split}
   E^{'1}_p[\qf^\F](\tau_2)            +\MM^1_p[\qf^\F](\tau_1,\tau_2)+  F^1_p[\qf^\F](\tau_1, \tau_2)  &\les E^{'1}_{p}[\qf^\F](\tau_1)+  \II^1_p[\qf^\F, \M_2[\qf, \qf^\F]]( \tau_1,\tau_2) + \err^1_{\ep, \de}(\tau_1, \tau_2)[\qf^\F]  \\
    &-\int_{\MM_{trap}}\Lambda  T T \qf^\F \c T(\M_2[\qf, \qf^\F] )\\
   &-\int_{\MM_{trap}}\Lambda  T R \qf^\F \c R(\M_2[\qf, \qf^\F] )\\
   &-\int_{\MM_{trap}}\Lambda  T \dkb \qf^\F \c \dkb(\M_2[\qf, \qf^\F] )
   \end{split}
    \eea
\end{itemize}
\end{theorem}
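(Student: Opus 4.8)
The plan is to mirror, for $\qf^\F$, the multiplier argument that Theorem \ref{theorem-0th-estimate} carries out for $\qf$, with two modifications: one derivative is added throughout, and the inhomogeneous term on the right must be split so that the three trapping-region pieces in $T$, $R$ and $\dkb$ are retained with their exact sign rather than being absorbed into $\II^1_p$. First I would commute the equation $\square_\g\qf^\F = V_2\qf^\F + \M_2[\qf,\qf^\F]$ with the operators $\dk=\{e_3, re_4, \dkb\}$, producing for each first-order commuted quantity an equation of the same schematic form $\square_\g(\dk\qf^\F)=V_2(\dk\qf^\F)+\dk\M_2 + (\text{commutator terms})$; the commutator terms involving $[\square_\g,\dk]$ and $[V_2,\dk]$ are absorbed into the error norm $\err^1_{\ep,\de}$ exactly as in the $0$th-order argument, using that the coefficients of $V_2$ and of the metric are smooth functions of $r$ that are $O(\ep)$-close to their Reissner-Nordstr\"om values.

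Next, to each of the commuted equations and to the base equation I would apply Proposition \ref{prop:qf-tensorial'} with the \emph{same} multiplier triplet used in Theorem \ref{theorem-0th-estimate}, namely $(X,w,M)=(f_\de R, w_\de, 2hR)+\ep_\HH(Y_\HH,0,0)+(0,0,2h\Rc)+(\Lambda T,0,0)+(f_p e_4, \tfrac{2f_p}{r}, \tfrac{2f_p'}{r}e_4)$, and sum over the $\dk$-derivatives to build the higher-order energy-momentum currents. The bulk lower bound \eqref{equation:LBforEE}, the non-negativity of the boundary integral at $r=r_A$, the coercivity of the flux through $\CCb_*$ and the control of the data slice $\Si(\tau_1)$ go through verbatim, now producing $\MM^1_p[\qf^\F]$, $E^1_p[\qf^\F]$, $F^1_p[\qf^\F]$ on the left and $E^1_p[\qf^\F](\tau_1)$ on the right. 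The only genuinely new bookkeeping is in handling the inhomogeneous term $-\int_{\MM(\tau_1,\tau_2)}\big(X(\dk\qf^\F)+\tfrac12 w\,\dk\qf^\F\big)\c\dk\M_2$: since $X$ contains the $\Lambda T$ piece, inside $\MM_{trap}$ I keep the three contributions $-\int_{\MM_{trap}}\Lambda\,T\dk\qf^\F\c\dk\M_2$ untouched (these are precisely the $T$, $R$, $\dkb$ terms displayed in \eqref{estimate2}), while outside $\MM_{trap}$ and for the $f_\de R$, $\ep_\HH Y_\HH$, $f_p e_4$ and $w$ contributions I bound everything by absolute value and split by Cauchy-Schwarz against a small parameter $\lambda$, absorbing the $|\dk\qf^\F|$-factors into $\MM^1_p[\qf^\F]$ and collecting the $|\dk\M_2|^2$-factors into $\II^1_p[\qf^\F,\M_2]$ via its definition \eqref{definition-norm-M}.

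The main obstacle, and the reason for keeping those three trapping integrals with their sign, is that in $\MM_{trap}$ the Morawetz bulk $\MM^1_p[\qf^\F]$ degenerates in the $T$ and $\dkb$ directions, so one cannot absorb $\int_{\MM_{trap}}|T\dk\qf^\F||\dk\M_2|$ into the left-hand side; were one to try, the degeneracy at the photon sphere would destroy the estimate. Retaining them verbatim is exactly what later allows the cancellation against the corresponding $\qf$-terms when \eqref{estimate1} and \eqref{estimate2} are combined (the diagonalizability of the top-order coupled system, cf.\ Remark \ref{diagonalizable}). The second, more routine difficulty is the redshift region: the coercivity of the full first-order energy $E^1_p$ and the sign of the horizon flux require the redshift vectorfield $Y_\HH$ to dominate the commutator terms generated by $e_3$-differentiation near $|\Up|\le 2\de_\HH$, which holds for $\ep$ small by the standard positivity of $\piY$ in that region. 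Since all of these steps are the direct higher-order and sign-tracking analogues of the argument already sketched for $\qf$, I would, as the authors do, defer the full computation to \cite{Elena-Future} and record only the resulting estimates \eqref{estimate2}-\eqref{estimate2-2}.
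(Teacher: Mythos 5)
Your overall strategy --- commute the equation, apply to each commuted equation the same multiplier triplet $(X,w,M)$ as in Theorem \ref{theorem-0th-estimate}, retain the $\Lambda T$-multiplier terms in $\MM_{trap}$ with their sign, and absorb everything else into $\MM^1_p$, $\II^1_p$ and $\err^1_{\ep,\de}$ --- is exactly the paper's (its own proof is a two-line sketch deferring the computation to \cite{Elena-Future}). There is, however, one concrete discrepancy: the paper commutes with $T$, $R$ and $\dkb$, whereas you commute with $\dk=\{e_3,re_4,\dkb\}$, and your parenthetical claim that the retained integrals $-\int_{\MM_{trap}}\Lambda\, T\dk\qf^\F\c\dk\M_2$ ``are precisely the $T$, $R$, $\dkb$ terms displayed in \eqref{estimate2}'' is false for that choice. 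In $\MM_{trap}$ one has $e_3=\Up^{-1}(T-R)$ and $re_4=r(T+R)$, so your retained terms expand into the displayed ones with variable coefficients ($\Up^{-2}$, $r^2$) plus cross terms of the form $TT\qf^\F\c R(\M_2)$ and $TR\qf^\F\c T(\M_2)$, plus terms where derivatives hit the frame coefficients. The cross terms happen to be harmless --- each pairs a factor degenerate in the bulk with one controlled non-degenerately, so they can be absorbed as in Proposition \ref{absorption-coupling-terms-1} --- but the pure terms then carry the wrong weights, and this is not cosmetic: the exact form $-\Lambda\int_{\MM_{trap}} TT\qf^\F\c T(\M_2)+TR\qf^\F\c R(\M_2)+T\dkb\qf^\F\c\dkb(\M_2)$ is precisely what Proposition \ref{absorption-coupling-terms-2} consumes, where $T\dkb\qf^\F\c\dkb(C_2[\qf])$ is integrated by parts to $\tfrac2r\lapp_2\qf^\F\c T\qf$ and $TT\qf^\F\c T(C_2[\qf])$ is converted via the wave equation (after the renormalization of $T$ used there) to another $\tfrac2r\lapp_2\qf^\F\c T\qf$, so that these cancel against $-2T\qf\c\tfrac2r\lapp_2\qf^\F$ from the first equation. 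With your $e_3$, $re_4$ weights the three coefficients would no longer match, and the cancellation of the doubly-degenerate terms at the photon sphere --- the entire reason for sign-tracking in this theorem --- would fail, or at least require redoing all the weights.

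The fix is immediate: commute with $T$, $R$, $\dkb$ as the paper prescribes (these are non-degenerate and span the tangent space on $\MM_{trap}$, where $\Up$ is bounded below), which produces the retained terms verbatim; then recover the full $\dk$-based norms $E^1_p$, $\MM^1_p$, $F^1_p$ in the remaining regions by frame equivalence where $\Up\ges 1$ and by redshift commutation near the horizon --- a point you correctly flag, and which is genuinely needed since $T$ and $R$ degenerate to the same null direction as $\Up\to 0$, so $e_3$-derivative control there must come from the $Y_\HH$-commuted estimate rather than from $T$, $R$. Everything else in your write-up (commutator terms into $\err^1_{\ep,\de}$, Cauchy-Schwarz splitting outside trapping into $\MM^1_p$ versus $\II^1_p$ via \eqref{definition-norm-M}, and the rationale for keeping the three trapping integrals with their sign) matches the paper's intended argument.
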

\begin{proof} The proof proceeds as in Theorem \ref{theorem-0th-estimate} after commuting the equation $\square_\g\qf^\F=V_2\qf^\F+ \M_2[\qf, \qf^\F]$ with $T$, $R$ and $\dkb$. To appear in \cite{Elena-Future}.
\end{proof}

Theorems  \ref{theorem-0th-estimate} and \ref{1st-order-estimates} have technical but standard proofs which will be postponed to appear in \cite{Elena-Future} for the sake of the clarity of the combined estimates of the system.

\subsection{Estimates for the coupling terms}\label{section-coupling}
The aim of this section is to prove \eqref{coupling-terms-estimate1} and \eqref{coupling-terms-estimate2} in Step 2 of the proof of Theorem \ref{main-theorem}.

By \eqref{definition-operator-P}, \eqref{definition-operator-Q} and \eqref{definition-main-coefficients}, the coupling terms are given by  
\beaa
 C_1[\qf^\F]&=&\frac{2}{r}\lapp_2\qf^{\F}-\frac{2}{r}\kab   e_4\qf^{\F}-\frac{2}{r}\ka e_3\qf^{\F} + \frac 1 r \left(3\ka\kab+8\rho+4\rhoF^2\right)\qf^{\F}, \\
 C_2[\qf]&=& -\frac {2}{ r^3} \qf
\eeaa
Recall that, in an $O(\ep)$ perturbation of Reissner-Nordstr{\"o}m spacetime, $\ka-\overline{\ka}, \kab-\overline{\kab}, \rho-\overline{\rho}, \rhoF-\overline{\rhoF}=O(\ep)$, therefore we can write schematically
\beaa
\ka&=& \frac 1 r +O(\ep), \qquad \kab=\frac 1 r +O(\ep), \qquad \rho=\frac{\varpi}{r^3}+O(\ep), \qquad \rhoF=\frac{e}{r^2}+O(\ep)
\eeaa
Outside the trapping region we will schematize\footnote{Ignoring the quadratic terms} the coupling terms as
\bea\label{schematic-coupling-terms}
C_1[\qf^\F]&=&\Big[\frac{1}{r}\lapp_2\qf^{\F}, \frac{1}{r^2}   e_4\qf^{\F},\frac{1}{r^2} e_3\qf^{\F}, \frac{1}{r^3}\qf^{\F}\Big], \\
 C_2[\qf]&=& \Big[\frac {1}{ r^3} \qf\Big]
\eea

The coupling terms appear in the equations with a good decay in $r$, therefore outside the trapping region they can be easily absorbed by the Morawetz bulks of $\qf$ and $\qf^\F$. In the trapping region though the Morawetz bulks are degenerate, therefore they can't absorb all second derivatives of $\qf^\F$, in particular $\lapp \qf^\F$. We need to use the equations to absorb the terms we need. It turns out that the structure of the coupling terms allows a very convenient cancellation of the bad terms.

We first prove the absorption of the non-degenerate terms, as in \eqref{coupling-terms-estimate1}.

\begin{proposition}\label{absorption-coupling-terms-1}With the notations above, for all $\de \le p \le 2-\de$, we have
\beaa
&&\left(\II_p[\qf, e  C_1[\qf^\F]]+\II^1_p[\qf^\F, e  C_2[\qf]]\right)(\tau_1, \tau_2)-e \Lambda \int_{\MM_{trap}}  T R \qf^\F \c R( C_2[\qf] ) \\
&\les& e \sup_{\tau\in[\tau_1, \tau_2]} \left(E[\qf](\tau) +E^1[\qf^\F]\right)+e \left(\MM_p[\qf]+\MM^1_p[\qf^\F]\right)(\tau_1, \tau_2)
\eeaa
\end{proposition}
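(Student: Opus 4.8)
The goal is to bound the norms $\II_p[\qf,eC_1[\qf^\F]]$, $\II_p^1[\qf^\F,eC_2[\qf]]$ together with the single trapping integral $-e\Lambda\int_{\MM_{trap}}TR\qf^\F\c R(C_2[\qf])$ by $e$ times the supremum of the energies plus $e$ times the Morawetz bulks. The strategy is to treat the region outside the trapping region and the trapping region separately, exploiting the favorable $r$-weights in the coupling terms \eqref{schematic-coupling-terms}.

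\textbf{Outside the trapping region.} Recall from \eqref{definition-norm-M} that on $\MM_{\ntrap}$ the norm $\II_p[\Psi,\M]$ contributes $\int_{\MM_{\ntrap}}r^{1+p}|\M|^2$. For the first term I would insert $\M=eC_1[\qf^\F]$ and use the schematic form \eqref{schematic-coupling-terms}: each piece carries at least a factor $r^{-1}$, so $r^{1+p}|eC_1[\qf^\F]|^2$ is controlled by $e^2$ times $r^{p-1}\big(|\nabb\qf^\F|^2+r^{-2}|e_4\qf^\F|^2+r^{-2}|e_3\qf^\F|^2+r^{-4}|\qf^\F|^2\big)$. For $\de\le p\le 2-\de$ these quantities are all dominated by the first-order bulk $\MM_p^1[\qf^\F]$ (the $\lapp_2\qf^\F$ term is absorbed after commuting with $\dkb$, which is exactly why we use the $1$st-order estimate for $\qf^\F$; the $e_3\qf^\F$ term is controlled by the $r^{-1-\de}|e_3\qf^\F|^2$ piece of $\Morr$). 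Similarly $r^{1+p}|eC_2[\qf]|^2\les e^2 r^{p-5}|\qf|^2$, which is absorbed by the $|\qf|^2$ part of $\MM_p[\qf]$ thanks to the strong $r^{-5}$ decay. Thus the $\MM_{\ntrap}$ contributions of both $\II$-norms are $\les e\big(\MM_p[\qf]+\MM_p^1[\qf^\F]\big)$, with the extra power of $e$ since the coefficients are $O(e)$ and we factor one $e$ out.

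\textbf{Inside the trapping region.} Here $\II_p$ contributes only the undifferentiated combination $\int_{\MM_{trap}}(|R\Psi|+r^{-1}|\Psi|)|\M|$, which avoids the trapped directions $T$ and $\nabb$. For $\II_p[\qf,eC_1[\qf^\F]]$ I would bound $|eC_1[\qf^\F]|$ by $e$ times $\big(|\nabb\qf^\F|+r^{-1}|e_4\qf^\F|+r^{-1}|e_3\qf^\F|+r^{-2}|\qf^\F|\big)$, and since $r$ is comparable to $r_P$ on $\MM_{trap}$, a Cauchy--Schwarz splitting $ab\le\frac12(a^2+b^2)$ bounds the product by $e$ times the energy densities of $\qf$ and $\qf^\F$, integrated over the bounded-in-$r$ trapping slab; controlling the $\tau$-integral by $\sup_\tau E$ gives $e\sup_\tau(E[\qf]+E^1[\qf^\F])$. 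The term $\II_p^1[\qf^\F,eC_2[\qf]]$ in $\MM_{trap}$ and the explicit integral $-e\Lambda\int_{\MM_{trap}}TR\qf^\F\c R(C_2[\qf])$ both involve $C_2[\qf]=-\tfrac{2}{r^3}\qf$ and its $R$-, $\dkb$-derivatives; since $C_2$ carries no $T$- or $\nabb$-derivative of the dangerous factor, the $R$-direction is non-degenerate and $R(C_2[\qf])\sim r^{-3}R\qf+\text{l.o.t.}$ pairs against $TR\qf^\F$ via Cauchy--Schwarz into $|R\qf^\F|^2$ and $|R\qf|^2$ energy densities, again absorbed by $e\sup_\tau(E[\qf]+E^1[\qf^\F])$.

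The genuinely delicate terms in the trapping region are precisely those paired with the degenerate directions $T$ and $\nabb$, namely $\int_{\MM_{trap}}T\qf\c C_1[\qf^\F]$ and $TT\qf^\F\c T(C_2[\qf])$, $T\dkb\qf^\F\c\dkb(C_2[\qf])$; these \emph{cannot} be absorbed here and are deliberately excluded from $\II_p$ (per the Remark following \eqref{definition-norm-M}) and from the present statement — they are handled by the cancellation of Proposition \ref{absorption-coupling-terms-2}. The main obstacle in the present proposition is therefore a bookkeeping one: verifying that every term retained in $\II_p$ inside $\MM_{trap}$ genuinely lands in the non-degenerate $R$- or undifferentiated directions, so that the Cauchy--Schwarz absorption into $\sup_\tau E$ and the bulks $\MM_p,\MM_p^1$ goes through with only the single overall factor of $e$, uniformly as $p\to\de$ and $p\to 2-\de$.
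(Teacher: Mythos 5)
Your treatment outside the trapping region matches the paper's: you compare the $r$-weights of the schematic coupling terms \eqref{schematic-coupling-terms} against the densities of $\MM_p[\qf]$ and $\MM^1_p[\qf^\F]$ (with the $e_3$-term landing on the $r^{-1-\de}|e_3\qf^\F|^2$ piece, which is precisely where $p\le 2-\de$ enters), and your identification of the deferred degenerate terms $T\qf\c C_1[\qf^\F]$, $TT\qf^\F\c T(C_2[\qf])$, $T\dkb\qf^\F\c\dkb(C_2[\qf])$ as belonging to Proposition \ref{absorption-coupling-terms-2} is exactly right. The genuine gap is in the trapping region: you bound each product there by energy densities of \emph{both} $\qf$ and $\qf^\F$ and then claim the $\tau$-integral is "controlled by $\sup_\tau E$". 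That step fails: $\int_{\tau_1}^{\tau_2}E(\tau)\,d\tau$ grows linearly in $\tau_2-\tau_1$ and is not majorized by $\sup_\tau E(\tau)$, so a spacetime integral over $\MM_{trap}(\tau_1,\tau_2)$ can never be absorbed by sup-energies alone. Accordingly, your claimed trapping-region outputs omit the bulk contributions that must appear there (and do appear on the right-hand side of the statement).

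The paper's mechanism is a \emph{mixed} pairing that exploits which quantities survive non-degenerately at trapping: in each product exactly one factor is placed into the non-degenerate components of a Morawetz bulk --- $|R\qf|^2+r^{-2}|\qf|^2$ inside $\MM_p[\qf]$, and $|R\dk\qf^\F|^2+r^{-2}|\dk\qf^\F|^2$ inside $\MM^1_p[\qf^\F]$ --- while only the factor not representable in any bulk goes to a slice energy with a sup. Concretely, for the up-to-first-derivative pieces of $C_1[\qf^\F]$ one takes $E[\qf](\tau)^{1/2}\big(\int_{\Si_{trap}(\tau)}|\dk\qf^\F|^2\big)^{1/2}$, yielding $\sup_\tau E[\qf]+\MM^1_p[\qf^\F]$; for the genuinely second-order piece $\frac2r\lapp_2\qf^\F$ --- which your schematic list miswrites as the first-order $|\nabb\qf^\F|$, and whose square appears in the trapped bulk only with the degenerate factor $(r^2-3\varpi r+2e^2)^2$ --- the pairing is reversed, $E^1[\qf^\F](\tau)^{1/2}\big(\int_{\Si_{trap}(\tau)}|\Rbrev\qf|^2+r^{-1}|\qf|^2\big)^{1/2}$, yielding $\sup_\tau E^1[\qf^\F]+\MM_p[\qf]$. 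Finally, for $-e\Lambda\int_{\MM_{trap}}TR\qf^\F\c R(C_2[\qf])$ the paper first commutes, $TR\qf^\F=RT\qf^\F+[T,R]\qf^\F\sim R\dk\qf^\F+\dk\qf^\F$, so that the non-degenerate derivative sits outermost and these factors land in $\MM^1_p[\qf^\F]$; your version keeps $TR\qf^\F$ with the outer $T$ (degenerate in the bulk) and again pushes both factors into sup-energies. Each of your trapped estimates can be repaired by this one-factor-into-the-bulk device, but as written the trapping-region argument does not close.
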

\begin{proof} Recall that, by definition, the Morawetz bulks $\MM_p[\qf](\tau_1,\tau_2)$ and $\MM^1_p[\qf^\F](\tau_1,\tau_2)$ are given by 
\beaa
\MM_p[\qf](\tau_1,\tau_2)&=& \int_{\MM_{trap}} |R(\qf)|^2+ r^{-2} |\qf|^2 + \frac{(r^2-3\varpi r +2e^2)^2}{r^4}\left( |\nabb \qf|^2+\frac{D}{r^2} |T\qf|^2 \right)\\
 &&+\int_{\MM_{\ntrap}} r^{-3} \big(|e_4\qf|^2+ r^{-1} |\qf|^2\big) + r^{-1} |\nabb \qf|^2+r^{-1-\de}  |e_3(\qf)|^2 \\
 &&+\int_{\MM_{\ge  R}(\tau_1, \tau_2) }  r^{p-1}  \left( p | \ec_4(\qf) |^2 +(2-p)   |\nabb \qf|^2+   r^{-2}  |\qf|^2\right)
 \eeaa
 and
 \beaa
\MM^1_p[\qf^\F](\tau_1,\tau_2)&=& \int_{\MM_{trap}} |R(\qf^\F)|^2+ r^{-2} |\qf^\F|^2 + \frac{(r^2-3\varpi r +2e^2)^2}{r^4}\left( |\nabb \qf^\F|^2+\frac{D}{r^2} |T\qf^\F|^2 \right)\\
 &&+\int_{\MM_{\ntrap}} r^{-3} \big(|e_4\qf^\F|^2+ r^{-1} |\qf^\F|^2\big) + r^{-1} |\nabb \qf^\F|^2+r^{-1-\de}  |e_3(\qf^\F)|^2 \\
 &&+\int_{\MM_{\ge  R}(\tau_1, \tau_2) }  r^{p-1}  \left( p | \ec_4(\qf^\F) |^2 +(2-p)   |\nabb \qf^\F|^2+   r^{-2}  |\qf^\F|^2\right)\\
 && +\int_{\MM_{trap}} |R(\dk\qf^\F)|^2+ r^{-2} |\dk\qf^\F|^2 + \frac{(r^2-3\varpi r +2e^2)^2}{r^4}\left( |\nabb \dk\qf^\F|^2+\frac{D}{r^2} |T\dk\qf^\F|^2 \right)\\
 &&+\int_{\MM_{\ntrap}} r^{-3} \big(|e_4\dk\qf^\F|^2+ r^{-1} |\dk\qf^\F|^2\big) + r^{-1} |\nabb \dk\qf^\F|^2+r^{-1-\de}  |e_3(\dk\qf^\F)|^2 \\
 &&+\int_{\MM_{\ge  R}(\tau_1, \tau_2) }  r^{p-1}  \left( p | \ec_4(\dk\qf^\F) |^2 +(2-p)   |\nabb \dk\qf^\F|^2+   r^{-2}  |\dk\qf^\F|^2\right)
\eeaa
Consider  $\II_p[\qf, e C_1[\qf^\F]](\tau_1, \tau_2)$, where we can use the schematic version $C_1[\qf^\F]=\Big[\frac{1}{r}\lapp_2\qf^{\F}, \frac{1}{r^2}   e_4\qf^{\F},\frac{1}{r^2} e_3\qf^{\F}, \frac{1}{r^3}\qf^{\F}\Big]$. We consider the spacetime integral in the three region in which the spacetime is divided: $\MM_{red}$, $\MM_{trap}$ and $\MM_{far}$.

 In the redshift region $\MM_{red}$, the Morawetz bulk $\MM^1_p[\qf^\F]$ contains all second derivatives, and powers of $r$ don't matter. Therefore
\beaa
\II_p[\qf,  e C_1[\qf^\F]]&\simeq& e^2 \int_{\MM_{red}(\tau_1,\tau_2)}|\dkb^2\qf^{\F}|^2+ |\dk\qf^{\F}|^2+ |\qf^{\F}|^2 \les e^2 \MM^1_p[\qf^\F](\tau_1,\tau_2)
 \eeaa
 
  In the trapping region $\MM_{trap}$, we have
\beaa
\II_p[ \qf, e C_1[\qf^\F]]&\simeq& e \int _{\tau_1}^{\tau_2}  d\tau \int_{\Si_{trap}(\tau)  }  ( |R\qf|   +r^{-1} |\qf| ) (|\lapp_2\qf^{\F}|+| e_4\qf^{\F}|+|e_3\qf^{\F}|+|\qf^{\F}|)
\eeaa
All the terms with up to one derivative of $\qf^\F$ can be absorbed by
\beaa
e \int _{\tau_1}^{\tau_2}  d\tau \int_{\Si_{trap}(\tau)  }  ( |R\qf|  +r^{-1} |\qf| ) (| e_4\qf^{\F}|+|e_3\qf^{\F}|+|\qf^{\F}|)&\les& e \int_{\tau_1}^{\tau_2} E[\qf](\tau)^{1/2}\big( \int_{\Si_{trap}(\tau)} |\dk \qf^\F|^2\big)^{1/2}\\
&\les& e \sup_{\tau\in[\tau_1, \tau_2]} E[\qf](\tau)+e \MM^1_p[\qf^\F](\tau_1, \tau_2)
\eeaa
Similarly, since $R\qf$ does not degenerate in the trapping region, we can bound
\beaa
e \int _{\tau_1}^{\tau_2}  d\tau \int_{\Si_{trap}(\tau)  }  ( |R\qf|  +r^{-1} |\qf| ) |\lapp_2\qf^{\F}|&\les& e \int_{\tau_1}^{\tau_2} E^1[\qf^\F](\tau)^{1/2}\big( \int_{\Si_{trap}(\tau)}|\Rbrev\qf|^2  +r^{-1} |\qf|^2\big)^{1/2}\\
&\les& e \sup_{\tau\in[\tau_1, \tau_2]} E^1[\qf^\F](\tau)+e \MM_p[\qf](\tau_1, \tau_2)
\eeaa
 In the far-away region $\MM_{far}$, we write
\beaa
\II_p[\qf, e C_1[\qf^\F]](\tau_1,\tau_2)&=& e^2\int_{\MM_{far}(\tau_1,\tau_2)} r^{1+p}  |\frac{1}{r}\lapp_2\qf^{\F}, \frac{1}{r^2}   e_4\qf^{\F},\frac{1}{r^2} e_3\qf^{\F}, \frac{1}{r^3}\qf^{\F}|^2 \\
&=& e^2 \int_{\MM_{far}(\tau_1,\tau_2)} r^{-5+p}|\dkb^2\qf^{\F}|^2+r^{-3+p} |e_4\qf^{\F}|^2+r^{-3+p}| e_3\qf^{\F}|^2+r^{-5+p}|\qf^{\F}|^2
\eeaa
On the other hand in the far-away region, for $\de\le p \le 2-\de$, the Morawetz bulk simplifies to
\beaa
\MM^1_p[\qf^\F](\tau_1,\tau_2)&\simeq& \int_{\MM_{far}(\tau_1, \tau_2) }  r^{-3+p}|\dkb \dk\qf^\F|^2+r^{-1-\de}  |e_3\dk\qf^\F|^2+ r^{-1+p}   | e_4\dk\qf^\F |^2 \\
&&+\int_{\MM_{far}(\tau_1, \tau_2) }r^{-1+p}   | e_4\qf^\F |^2 + r^{-1-\de}  |e_3\qf^\F|^2+   r^{-1+p}|\nabb \qf^\F|^2+ r^{-3+p} |\qf^\F|^2
\eeaa
 Since the powers of $r$ of $\II_p[ \qf, e C_1[\qf^\F]]$ decay all faster than the respective ones in $\MM^1_p[\qf^\F]$, we have that, for $r\ge R_0$,
 \beaa
 \II_p[ \qf, e C_1[\qf^\F]](\tau_1, \tau_2) &\les&e^2 \MM^1_p[\qf^\F](\tau_1,\tau_2)
 \eeaa

Consider $\II^1_p[\qf^\F, C_2[\qf]]$ with $C_2[\qf]= \frac {1}{ r^3} \qf$. We separate this norm in the three regions.

 In the redshift region $\MM_{red}$, we have
\beaa
\II^1_p[\qf^\F, e C_2[\qf]](\tau_1, \tau_2)&\simeq& e^2 \int_{\MM_{red}(\tau_1,\tau_2)}  |\qf|^2  +e^2 \int_{\MM_{red}(\tau_1,\tau_2)} |\dk \qf|^2\les e^2\MM_p[\qf](\tau_1, \tau_2)
\eeaa
 In the trapping region $\MM_{trap}$, we have 
\beaa
\II^1_p[\qf^\F, e C_2[\qf]](\tau_1, \tau_2)&=&e \int _{\tau_1}^{\tau_2}  d\tau \int_{\Si_{trap}(\tau)  }  ( |R\dk\qf^\F| +r^{-1} |\dk\qf^\F| ) |\dk\qf|
\eeaa
The last term, with only one derivative of $\qf^\F$ can be easily bounded:
\beaa
e\int _{\tau_1}^{\tau_2}  d\tau \int_{\Si_{trap}(\tau)  }  |\dk\qf^\F| |\dk\qf|&\les& e\int_{\tau_1}^{\tau_2} E[\qf](\tau)^{1/2}\big( \int_{\Si_{trap}(\tau)} |\dk \qf^\F|^2\big)^{1/2}\\
&\les& e\sup_{\tau\in[\tau_1, \tau_2]} E[\qf](\tau)^{1/2}\int_{\tau_1}^{\tau_2}\big( \int_{\Si_{trap}(\tau)} |\dk \qf^\F|^2\big)^{1/2}\\
&\les& e\sup_{\tau\in[\tau_1, \tau_2]} E[\qf](\tau)+e\MM^1_p[\qf^\F](\tau_1, \tau_2)
\eeaa
Similarly, the term $|R(\dk\qf^\F)|^2$ is present in the Morawetz bulk of $\qf^\F$ without degeneracy, therefore as before 
\beaa
e\int _{\tau_1}^{\tau_2}  d\tau \int_{\Si_{trap}(\tau)  } |\Rbrev\dk\qf^\F| |\dk\qf|&\les& e\sup_{\tau\in[\tau_1, \tau_2]} E[\qf](\tau)+e\MM^1_p[\qf^\F](\tau_1, \tau_2)
\eeaa
 In the far-away region $\MM_{far}$, we need to take into account the power of $r$ in the structure of the coupling terms \eqref{schematic-coupling-terms}. We have 
\beaa
\II^1_p[\qf^\F, eC_2[\qf]]&=& e^2\int_{\MM_{far}(\tau_1,\tau_2)} r^{1+p}  |\frac {1}{ r^3} \qf|^2  + e^2\int_{\MM_{far}(\tau_1,\tau_2)} r^{1+p}  |\frac {1}{ r^3} \dk\qf|^2 \\
&=& e^2\int_{\MM_{far}(\tau_1,\tau_2)} r^{-5+p}  |\qf|^2 + r^{-5+p}  | e_3\qf|^2+  r^{-3+p}  | e_4\qf|^2+  r^{-3+p}  | \nabb\qf|^2
\eeaa
On the other hand in the far-away region, for $\de\le p \le 2-\de$, the Morawetz bulk simplifies to
\beaa
\MM_p[\qf](\tau_1,\tau_2)&=& \int_{\MM_{far}} r^{-3} \big(|e_4\qf|^2+ r^{-1} |\qf|^2\big) + r^{-1} |\nabb \qf|^2+r^{-1-\de}  |e_3(\qf)|^2 \\
 &&+\int_{\MM_{far}(\tau_1, \tau_2) }  r^{p-1}  \left( p | \ec_4(\qf) |^2 +(2-p)   |\nabb \qf|^2+   r^{-2}  |\qf|^2\right)\\
 &\simeq& \int_{\MM_{far}(\tau_1, \tau_2) }  r^{-3+p} |\qf|^2+r^{-1-\de}  |e_3\qf|^2+ r^{-1+p}   | e_4\qf |^2 +  r^{-1+p}|\nabb \qf|^2  
 \eeaa
 Since the powers of $r$ of $\II^1_p[ C_2[\qf]]$ decay all faster than the respective ones in $\MM_p[\qf]$, we have that, for $r\ge R_0$,
 \beaa
 \II^1_p[ \qf^\F, eC_2[\qf]](\tau_1, \tau_2) &\les&e^2 \MM_p[\qf](\tau_1,\tau_2)
 \eeaa
 
 Finally we bound the last term commuting $T$ and $R$:
 \beaa
 -e \Lambda \int_{\MM_{trap}}  T R \qf^\F \c R( C_2[\qf] )&\les& -e \Lambda \int_{\MM_{trap}}  (RT \qf^\F +\dk \qf^\F)\c R( \qf) \les e\sup_{\tau\in[\tau_1, \tau_2]} E[\qf](\tau)+e\MM^1_p[\qf^\F](\tau_1, \tau_2)
 \eeaa
as before. Combining all the above bounds we get the desired estimate. 
\end{proof}

Observe that the proof of Proposition \ref{absorption-coupling-terms-1} fails in the trapping region for the terms involving $T\qf \ \lapp \qf^\F$, $TT\qf^\F \ T\qf$ and $T\nabb\qf^\F \ \nabb\qf$, because these terms are degenerate in the Morawetz bulks. To resolve this problem, we will make use of the equations, and the particular structure of the coupling terms implies a cancellation of the degenerate terms.

\begin{proposition}\label{absorption-coupling-terms-2} With the notations above, for all $\de \le p \le 2-\de$, we have
\beaa
&&-e \Lambda \int_{\MM_{trap}}2  T( \qf )\c  C_1[\qf^\F]+ T T \qf^\F \c T( C_2[\qf])+  T \dkb \qf^\F \c \dkb( C_2[\qf]) \\
&\les&  e \sup_{\tau\in[\tau_1, \tau_2]} \left( E[\qf](\tau)+ E^1[\qf^\F](\tau)\right)+e\left(\MM_p[\qf]+\MM^1_p[\qf^\F] + \hat{\MM}[\psi_3]\right)(\tau_1, \tau_2)
\eeaa
\end{proposition}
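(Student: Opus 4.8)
The plan is to expand the three bracketed products with the explicit expressions $C_1[\qf^\F]=\frac 2r\lapp_2\qf^\F-\frac 2r\kab\, e_4\qf^\F-\frac 2r\ka\, e_3\qf^\F+(\lot)$ and $C_2[\qf]=-\frac{2}{r^3}\qf$, and to isolate the genuinely dangerous contributions. In $\MM_{trap}$ the Morawetz bulks control only the non-degenerate quantities $R\qf,R\qf^\F,\Rbrev(\cdot)$ and the zeroth-order $r^{-1}|\cdot|$, whereas the $\nabb$ and $T$ derivatives degenerate at $r_P$. Writing $e_4=T+R$ and $e_3=\Up^{-1}(T-R)$, every piece in which at least one factor is $R\qf,R\qf^\F$ or a bare $\qf,\qf^\F$ is dispatched exactly as in Proposition \ref{absorption-coupling-terms-1}, by Cauchy--Schwarz against the non-degenerate part of $\MM_p,\MM^1_p$ or against $\sup_\tau E$. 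After this reduction the only survivors are the doubly-degenerate products $T\qf\c\lapp_2\qf^\F$, $T\qf\c T\qf^\F$, $TT\qf^\F\c T\qf$ and $T\nabb\qf^\F\c\nabb\qf$, each carrying an $O(e)$ background weight.

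First I would remove the highest angular order by invoking the wave equation \eqref{second-equation} for $\qf^\F$. Since $\square_2=-\frac12(e_3e_4+e_4e_3)+\lapp_2+(\lot)$, one has $\lapp_2\qf^\F=V_2\qf^\F+e\,C_2[\qf]+\frac12(e_3e_4+e_4e_3)\qf^\F+(\lot)$. Substituting this into $2T\qf\c\frac 2r\lapp_2\qf^\F$ trades the dangerous $\nabb^2\qf^\F$ for the mixed second derivative $e_3e_4\qf^\F$, together with the potential term $V_2\qf^\F$ and the source $-\frac{2e}{r^3}\qf$; the latter two pair with $T\qf$ to give, respectively, a product with one zeroth-order factor (absorbed into $\sup_\tau E[\qf]$ and the non-degenerate bulk of $\qf^\F$) and the total derivative $\frac12 T(|\qf|^2)$, both acceptable. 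Expressing $e_3e_4\qf^\F=\Up^{-1}(T^2-R^2)\qf^\F+(\lot)$, the $R^2\qf^\F$ part is non-degenerate and absorbs into $\MM^1_p[\qf^\F]$, while the $T^2\qf^\F$ part must be combined with the two remaining products $TT\qf^\F\c T\qf$ and $T\nabb\qf^\F\c\nabb\qf$.

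The decisive step is integration by parts along $T$. Because $T$ is Killing in Reissner--Nordstr{\"o}m and Killing up to $O(\ep)$ in $\MM$, one has $\int_{\MM_{trap}}T(F)=(\mbox{boundary on }\Si(\tau_2),\Si(\tau_1)\mbox{ and on }r=\tfrac56 r_P,\tfrac76 r_P)+O(\ep)\,(\mbox{bulk})$. Applying this to the symmetric combinations $T(T\qf^\F\c T\qf)$ and $T(\nabb\qf^\F\c\nabb\qf)$ converts the problematic bulk products into boundary terms, which are bounded by $\sup_{\tau\in[\tau_1,\tau_2]}\big(E[\qf](\tau)+E^1[\qf^\F](\tau)\big)$ on the top and bottom slices and by the non-degenerate part of $\MM_p+\MM^1_p$ on the lateral boundaries; note that the weight $r^2-3\varpi r+2e^2$ vanishes only at the single sphere $r_P$, so the fluxes through $r=\tfrac56 r_P,\tfrac76 r_P$ are non-degenerate. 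The $O(\ep)$ bulk error is of lower order in $\ep$ than $e$ and is reabsorbed.

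The heart of the matter, and the step I expect to be the main obstacle, is that the \emph{antisymmetric} remainders left by these integrations by parts, namely the genuinely degenerate products $T^2\qf^\F\c T\qf$ and $T\nabb\qf^\F\c\nabb\qf$ that are not total derivatives, must cancel. This is the manifestation of the diagonalizability of the principal part of the coupled system alluded to in Remark \ref{diagonalizable}: the factor $2$ in the combination $2\times\eqref{final-estimate1}+\eqref{final-estimate2}$ is dictated precisely by the requirement that the coefficient produced by substituting the wave equation into $2T\qf\c\frac 2r\lapp_2\qf^\F$ match those of the terms $TT\qf^\F\c T(C_2[\qf])$ and $T\dkb\qf^\F\c\dkb(C_2[\qf])$, so that the off-diagonal degenerate contributions combine into total $T$-derivatives up to lower order. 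Verifying this cancellation requires carrying the precise background coefficients $\ka\kab$, $\rho$, $\Up$ and the exact normalization of $\lapp_2$ rather than the schematic weights, and it is here that genuine lower-order quantities survive: the residual terms carry derivatives of $\qf^\F$ which, through the hierarchy relation $\qf^\F=\underline{P}(\psi_3)=r\kab^{-1}e_3\psi_3+\frac12 r\psi_3$, are re-expressed in $\psi_3$ and its derivatives. The resulting bulk is exactly $e\,\hat{\MM}[\psi_3](\tau_1,\tau_2)$, which appears on the right-hand side and is controlled via the transport estimates of Proposition \ref{transport-estimates}. Collecting the total-derivative boundary terms, the non-degenerate absorptions, and the $\psi_3$-bulk yields the stated inequality.
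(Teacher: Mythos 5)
There is a genuine gap: the one step that constitutes the entire content of the proposition --- the cancellation of the doubly-degenerate trapped products --- is asserted rather than carried out, and the mechanism you propose for it is not the right one. You eliminate $\lapp_2\qf^\F$ first via the wave equation, reducing everything to $T^2\qf^\F\c T\qf$ and $T\nabb\qf^\F\c\nabb\qf$, and then claim these are handled by integration by parts along $T$ with the ``antisymmetric remainders'' cancelling by diagonalizability. But since $\qf$ and $\qf^\F$ are different fields, $\int TT\qf^\F\c T\qf$ is not a total $T$-derivative: the $T$-integration by parts merely produces $-\int T\qf^\F\c TT\qf$ plus boundary terms, and $TT\qf^\F\c T\qf - T\qf^\F\c TT\qf$ remains a doubly-degenerate bulk product. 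Worse, the new term carries \emph{two} derivatives of $\qf$, while the proposition's right-hand side contains only the uncommuted quantities $E[\qf]$, $\MM_p[\qf]$ --- recall the whole design of the combined estimate is $0$th order in $\qf$ and $1$st order in $\qf^\F$ --- so nothing on the right-hand side can absorb $TT\qf$. The cancellation you defer to ``carrying the precise background coefficients'' is exactly what must be exhibited, and your sketch gives no way to close it.

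The paper's proof runs the substitutions in the opposite direction, converting all three dangerous terms into the single common form $\frac{c}{r}\lapp_2\qf^\F\c T\qf$: the angular product $\frac 2r T\dkb\qf^\F\c\dkb(C_2[\qf])$ becomes $+\frac 2r\lapp_2\qf^\F\c T\qf$ after integrating by parts once in $T$ and once in the angular directions (boundary terms bounded by $E[\qf]$, $E^1[\qf^\F]$); the product $\frac{2}{r^3}TT\qf^\F\c T\qf$ becomes $+\frac 2r\lapp_2\qf^\F\c T\qf$ by substituting the wave equation for $\qf^\F$, written with the renormalized vectorfield $\tilde T=\frac{r}{\Up^{1/2}}T$ in the trapping region (the estimates being derived with a deformation of $T$ equal to $\tilde T$ there), the terms $RR\qf^\F$, $R\qf^\F$, $V_2\qf^\F$, $e\frac 2r\qf$ being absorbed by the non-degenerate bulks and energies; these then cancel the $-\frac 4r\lapp_2\qf^\F\c T\qf$ coming from $C_1[\qf^\F]$, since $-4+2+2=0$, exactly and without any antisymmetrization. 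Note also that the $\hat{\MM}[\psi_3]$ term arises from the source $e^2 L_2[\qf^\F]=e^2\frac{4}{r^3}\psi_3$ on the right-hand side of the wave equation when $TT\qf^\F$ is substituted --- not, as you suggest, from re-expressing residual derivatives of $\qf^\F$ through the hierarchy relation $\qf^\F=\underline{P}(\psi_3)$. Your opening reduction (dispatching every product with an $R$-derivative or an undifferentiated factor, as in Proposition \ref{absorption-coupling-terms-1}) does agree with the paper, but the core of the argument is missing.
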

\begin{proof} We write explicitely the terms $C_1[\qf^\F]$ and $C_2[\qf]$ and since $T(r)=O(\ep)$ and $\nabb(r)=0$, we have
\beaa
&&-e \Lambda \int_{\MM_{trap}}2  T( \qf )\c  C_1[\qf^\F]+ T T \qf^\F \c T( C_2[\qf])+  T \dkb \qf^\F \c \dkb( C_2[\qf])\\
&=&-e \Lambda \int_{\MM_{trap}}2  T( \qf )\c  \left(\frac{2}{r}\lapp_2\qf^{\F}-\frac{2}{r}\kab   e_4\qf^{\F}-\frac{2}{r}\ka e_3\qf^{\F} + \frac 1 r \left(3\ka\kab+8\rho+4\rhoF^2\right)\qf^{\F}\right)-\frac {2}{ r^3} T T \qf^\F \c T( \qf)-\frac {2}{ r^3}  T \dkb \qf^\F \c \dkb(\qf)
\eeaa
The terms involving only one derivative of $\qf^\F$ can be bounded by their absolute value and by the Morawetz bulk as before:
\beaa
-e \Lambda \int_{\MM_{trap}}2  T( \qf )\c  \left(-\frac{2}{r}\kab   e_4\qf^{\F}-\frac{2}{r}\ka e_3\qf^{\F} + \frac 1 r \left(3\ka\kab+8\rho+4\rhoF^2\right)\qf^{\F}\right)\les e\sup_{\tau\in[\tau_1, \tau_2]} E[\qf](\tau)+e\MM^1_p[\qf^\F](\tau_1, \tau_2)
\eeaa
The higher order terms that are degenerate in the bulks are instead
\beaa
 \int_{\MM_{trap}}-2  T( \qf )\c  \left(\frac{2}{r}\lapp_2\qf^{\F}\right)+\frac {2}{ r^3} T T \qf^\F \c T( \qf)+\frac {2}{ r}  T \nabb \qf^\F \c \nabb(\qf)
\eeaa
where $\dkb=r\nabb$. The last term can be integrated by parts twice estimating the boundary terms by the energy, obtaining 
\beaa
\int_{\MM_{trap}} \frac {2}{ r}  T \nabb \qf^\F \c \nabb(\qf)&\les&\int_{\MM_{trap}} -\frac {2}{ r}  \nabb \qf^\F \c T \nabb(\qf)+\sup_{\tau\in[\tau_1, \tau_2]} E[\qf](\tau)+\sup_{\tau\in[\tau_1, \tau_2]} E^1[\qf^\F]\\
&\les& \int_{\MM_{trap}} \frac {2}{ r}  \lapp_2 \qf^\F \c T(\qf)+\sup_{\tau\in[\tau_1, \tau_2]} E[\qf](\tau)+\sup_{\tau\in[\tau_1, \tau_2]} E^1[\qf^\F]
\eeaa
Finally we can rewrite the second term using the wave equation for $\qf^\F$. In fact we write $\square_2$ in terms of the vectorfields $T$ and $R$, ignoring the quadratic terms:
\beaa
 \square_2  \Psi  &=& -\frac 1 \Up TT\Psi+\frac 1 \Up RR\Psi +\lapp_2\Psi +\frac 2 r  R \Psi
 \eeaa
Renormalizing $T$ as $\tilde{T}=\frac{r}{\Up^{1/2}}T$, we will obtain for $\qf^\F$
\beaa
\tilde{T}\tilde{T}\qf^\F  &=&r^2\lapp_2\qf^\F +\frac {r^2}{ \Up} RR\qf^\F + 2 r  R \qf^\F -r^2 \square_2\qf^\F \\
 &=&r^2\lapp_2\qf^\F +\frac {r^2}{ \Up} RR\qf^\F + 2 r  R \qf^\F -r^2 V_2 \qf^\F+ e \frac{2}{r} \qf -e^2 \frac{4}{r}\psi_3
\eeaa
Without loss of generality, we can derive all the above estimate with the normalization $\tilde{T}$ in the trapping region. In particular, in Proposition \ref{1st-order-estimates}, we can commute the equation with a deformation of $T$ that coincides with $\tilde{T}$ in the trapping region. With an abuse of notation, we denote this deformation again $T$. Using the equation, and estimating the other terms by the Morawetz bulks we have
\beaa
\int_{\MM_{trap}}\frac {2}{ r^3} T T \qf^\F \c T( \qf)&=& \int_{\MM_{trap}}\frac {2}{ r^3} \left( r^2\lapp_2\qf^\F +\frac {r^2}{ \Up} RR\qf^\F + 2 r  R \qf^\F -r^2 V_2 \qf^\F+ e \frac{2}{r} \qf -e^2 \frac{4}{r}\psi_3\right) \c T( \qf)\\
&\les& \int_{\MM_{trap}}\frac {2}{ r} \lapp_2\qf^\F \c T( \qf)\\
&&+ \sup_{\tau\in[\tau_1, \tau_2]} E[\qf](\tau)+e\MM_p[\qf](\tau_1, \tau_2)+\MM^1_p[\qf^\F](\tau_1, \tau_2) +e^2 \hat{\MM}[\psi_3](\tau_1, \tau_2)
\eeaa
We observe the cancellation for the higher order terms $\int_{\MM_{trap}}-\frac{4}{r} T( \qf )\c  \lapp_2\qf^{\F}+\frac {2}{ r} \lapp_2\qf^\F \c T( \qf)+\frac {2}{ r} \lapp_2\qf^\F \c T( \qf)$, therefore
\beaa
&& e\Lambda \int_{\MM_{trap}}-2  T( \qf )\c  \left(\frac{2}{r}\lapp_2\qf^{\F}\right)+\frac {2}{ r^3} T T \qf^\F \c T( \qf)+\frac {2}{ r}  T \nabb \qf^\F \c \nabb(\qf) \\
&\les&e \sup_{\tau\in[\tau_1, \tau_2]} \left( E[\qf](\tau)+ E^1[\qf^\F](\tau)\right)+e\left(\MM_p[\qf]+\MM^1_p[\qf^\F] + \hat{\MM}[\psi_3]\right)(\tau_1, \tau_2)
\eeaa
which gives the desired estimate.
\end{proof}

\begin{remark}\label{diagonalizable} The above cancellation is related to the structure of the equations, in particular to the fact that the system formed by the higher order terms is diagonalizable, as observed by Pei-Ken Hung. In fact consider the higher order terms of the system \eqref{schematic-system}: 
\bea\label{higher-order-system}
\begin{cases}
\square_2\qf= e \frac{2}{r^3} \dkb^2 \qf^\F, \\
\square_2\qf^{\F}=-e \frac{2}{r^3} \qf
\end{cases}
\eea
Commuting the second equation with $\dkb$ and denoting $\frak{p}=\dkb \qf^\F$ we obtain the system, up to lower order terms,
\beaa
\begin{cases}
\square_2\qf= e \frac{2}{r^3} \dkb \frak{p}, \\
\square_2\frak{p}=-e \frac{2}{r^3} \dkb\qf
\end{cases}
\eeaa
Therefore constructing the stress-energy tensor as $\QQ_{ab}=\QQ[\qf]_{ab}+\QQ[\frak{p}]_{ab}+e\frac{2}{r^3} \dk \qf \c \frak{p} \ \g_{ab}$, in taking the divergence of $\QQ_{ab} X^b$, for any vector field $X$, upon integration we will get a cancellation of the higher order terms of the equation with the divergence of the added term $e\frac{2}{r^3} \dk \qf \c \frak{p} \ \g_{ab}$. Observe that this is the same structure observed in \cite{Pei-Ken} for the system governing the odd part of the gravitational perturbation of Schwarzschild spacetime using harmonic gauge. A very interesting feature of this structure is that it doesn't need smallness of the right hand side, because the cancellation holds at the level of the stress-energy tensor.
Neverthless the system \eqref{schematic-system} contains lower order terms which don't have the same diagonalizable structure. Because of the presence of the first derivatives on the right hand side of the first equation, we need to commute the second equation with all derivatives, and at the present day we weren't able to find a way to cancel those terms as in the simpler case of \eqref{higher-order-system}. We consider instead the stress-energy tensor of the separated equations and get the cancellation of the higher order terms after obtaining the estimates, as shown in Proposition \ref{absorption-coupling-terms-1}. In this approach though we need smallness of the charge to absorb the terms on the right hand side. We expect that with a more careful analysis of the lower order terms, this hypothesis could be eliminated.
\end{remark}

\subsection{Transport estimates for the lower order terms}\label{section-lower-order}
The aim of this section is to prove estimate \eqref{absorb-lot} in Step 3 of the proof of Theorem \ref{main-theorem-1}.

By \eqref{definition-main-coefficients}, the lower order terms of the equations are schematically given by
 \beaa
 L_1[\qf]&=& \Big[\frac{1}{r^3}\psi_1, \frac{1}{r^2} \psi_0\Big], \\
 L_1[\qf^\F]= L_2[\qf^\F]&=& \Big[\frac{1}{r^3} \psi_3 \Big]
 \eeaa
 Ignoring quadratic terms, the operator $\underline{P}$ can be written as $\underline{P}f=\kab^{-1} e_3(rf)$, therefore by \eqref{quantities}, we have 
\bea\label{differential-relations}
\bsplit
e_3(r\psi_0)=-\frac 2 r  \psi_1, \\
e_3(r\psi_1)=-\frac 2 r \qf, \\
e_3(r\psi_3)=-\frac 2 r \qf^\F
\end{split}
\eea
We derive transport estimates for $\psi_0$, $\psi_1$ and $\psi_3$ using the above differential relations, in the same way as done in \cite{TeukolskyDHR}.

\begin{proposition}\label{transport-estimates} Let $\psi_0$, $\psi_1$, $\qf$, $\psi_3$, $\qf^\F$ be defined as in \eqref{quantities}. Then, for all $\de \le p \le 2-\de$, we have
\beaa
(E_p[\psi_0]+E_p[\psi_1])(\tau_2) +(\hat{\MM}_p[\psi_1]+\hat{\MM}_p[\psi_0])(\tau_1, \tau_2)&\les&  (E_p[\psi_0]+E_p[\psi_1])(\tau_1)+ \sup_{\tau\in[\tau_1, \tau_2]} E[\qf](\tau)\\
E_p[\psi_3](\tau_2)+\hat{\MM}_p[\psi_3](\tau_1, \tau_2)&\les& E_p[\psi_3](\tau_1)+ \sup_{\tau\in[\tau_1, \tau_2]} E^1[\qf^\F](\tau)
\eeaa
\end{proposition}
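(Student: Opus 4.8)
The plan is to treat $\psi_0$, $\psi_1$ and $\psi_3$ by the transport method, exploiting the fact that, by \eqref{quantities} and \eqref{differential-relations}, each satisfies a first-order transport equation in the ingoing direction $e_3$ whose source is a quantity of lower weight. Writing $\Phi:=r\psi$, I would read each relation in \eqref{differential-relations} as $e_3\Phi=G$, with $G=-\frac 2 r\qf^\F$ for $\psi_3$, $G=-\frac 2 r\qf$ for $\psi_1$, and $G=-\frac 2 r\psi_1$ for $\psi_0$. The crucial structural feature is that an estimate obtained by integrating along $e_3$ carries \emph{no} photon-sphere degeneracy: this is exactly why the bulk appearing on the left is the non-degenerate $\hat{\MM}_p$ rather than the degenerate $\MM_p$, and why the sources on the right are pure energies of $\qf$ and $\qf^\F$ with no Morawetz loss.

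The core is a transport lemma for $e_3\Phi=G$. First I would multiply by $2\,w(r)\,\Phi$, for a well-chosen $r$-weight $w$, and form the current $J=w\,|\Phi|^2\,e_3$, whose divergence is
\[
\D_\mu J^\mu = e_3(w)\,|\Phi|^2 + w\,\big(\D_\mu(e_3)^\mu\big)\,|\Phi|^2 + 2\,w\,\Phi\c G .
\]
Using $\D_\mu(e_3)^\mu=\kab+O(\ep)=\frac 1 r+O(\ep)$ and choosing $w$ so that the zeroth-order coefficient $e_3(w)+w\,\D_\mu(e_3)^\mu$ has a definite, favorable sign, the divergence theorem over $\MM(\tau_1,\tau_2)$ produces the energy $E_p[\psi](\tau_2)$ and the non-degenerate bulk $\hat{\MM}_p[\psi](\tau_1,\tau_2)$ on the left, the flux $E_p[\psi](\tau_1)$ on the right, and the source integral $\int_{\MM}w\,\Phi\c G$; the boundary term at $r=r_A$ and on $\CCb_*$ are of the right sign. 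The weight $w$ is matched region by region over $\MM_{red}$, $\MM_{trap}$ and $\MM_{far}$ so that its $r$-powers reproduce the definitions of $E_p$ and $\hat{\MM}_p$.

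To upgrade $|\psi|^2$-control to the full energy $E_p[\psi]$, I would commute \eqref{differential-relations} with $re_4$ and $\dkb=r\nabb$, which controls the $e_4$- and angular derivatives; the $e_3$-derivative is instead read off algebraically from \eqref{differential-relations} itself. Since the coefficients there are functions of $r$ only, commutation with $\dkb$ is clean, and commutation with $re_4$ generates only lower-order terms governed by the Reissner-Nordstr{\"o}m commutators $[e_3,e_4]$ and $[e_3,\nabb]$, all carrying $r$-weights that are reabsorbed into $\hat{\MM}_p[\psi]$ on the left. Applying the transport lemma to each commuted equation and summing, the source for $\psi_3$ becomes $-\frac 2 r\,\dk\qf^\F$; splitting $\int_{\MM}w\,\Phi\c G\le \eta\,\hat{\MM}_p[\psi_3]+\eta^{-1}\int_{\MM}w'\,|\dk\qf^\F|^2$, the first term is absorbed on the left while the second, after integrating along $e_3$ and using the $r$-decay of the coefficient, is dominated by $\sup_{\tau\in[\tau_1,\tau_2]}E^1[\qf^\F](\tau)$. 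This yields the second inequality.

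The first inequality I would obtain hierarchically. I first estimate $\psi_1$ exactly as above, its source $-\frac 2 r\qf$ being controlled by $\sup_{\tau\in[\tau_1,\tau_2]}E[\qf](\tau)$; I then estimate $\psi_0$, whose source $-\frac 2 r\psi_1$ is controlled by the bound for $\psi_1$ just obtained, so the two close simultaneously and the data $(E_p[\psi_0]+E_p[\psi_1])(\tau_1)$ appear on the right. I expect the main obstacle to be precisely the bookkeeping that turns the spacetime source integral into the \emph{supremum} $\sup_\tau E[\qf](\tau)$ (respectively $\sup_\tau E^1[\qf^\F](\tau)$) rather than a spacetime bulk of $\qf$: this is where the $L^1$-in-$e_3$ integration of the source must be combined with the $r$-decay so that the transverse integral converges and is dominated by the supremum of the energy, exactly as in the transport estimates of \cite{TeukolskyDHR}. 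A secondary technical point is to check that the single weight $w$ chosen for the transport current is simultaneously compatible with the $r^p$-weights of $E_p$ and $\hat{\MM}_p$ across all three regions $\MM_{red}$, $\MM_{trap}$, $\MM_{far}$.
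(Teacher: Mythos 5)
Your proposal reproduces the paper's proof in all structural respects: the paper uses exactly your transport current, namely $\div(r^n|r\psi|^2 e_3)$ with the single global weight $r^n$, $n=p-4$ (rather than a region-by-region $w$, which is an inessential difference), commutes \eqref{differential-relations} with $re_4$ and $r\nabb$ with the $e_3$-derivative recovered algebraically from the transport relation itself, and closes hierarchically: $\psi_3$ from $\qf^\F$, $\psi_1$ from $\qf$, and $\psi_0$ from $\psi_1$, the last by multiplying the $\psi_0$-estimate by a small constant $A\ll 1$ and summing so that the source integral $\int_\MM r^{-3+p}(|\psi_1|^2+|\dk\psi_1|^2)$ is absorbed by the non-degenerate bulk $\hat{\MM}_p[\psi_1]$ — which is precisely your "the two close simultaneously."

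The one genuine deviation is your handling of the source term, and there your displayed step would fail. You split $\int_\MM w\,\Phi\c G\le \eta\,\hat{\MM}_p[\psi_3]+\eta^{-1}\int_\MM w'|\dk\qf^\F|^2$ and claim the second term is dominated by $\sup_\tau E^1[\qf^\F](\tau)$ "using the $r$-decay of the coefficient." In the trapping region there is no $r$-decay to exploit ($r$ is comparable to $r_P$ there), so the only available bound is $\int_{\tau_1}^{\tau_2}\int_{\Si_{trap}(\tau)}|\dk\qf^\F|^2\le(\tau_2-\tau_1)\sup_\tau E^1[\qf^\F](\tau)$, and the factor $(\tau_2-\tau_1)$ destroys uniformity; a spacetime bulk of $\qf^\F$ is not controlled by the sup of its energy, and the left-hand side of the proposition contains no Morawetz bulk of $\qf^\F$ that could absorb it. The paper avoids squaring the source: it keeps the pairing linear, bounding $\int_{\MM_{trap}}|\qf^\F+\dk\qf^\F||\dk\psi_3|\les \int_{\tau_1}^{\tau_2}E^1[\qf^\F](\tau)^{1/2}\big(\int_{\Si_{trap}(\tau)}|\dk\psi_3|^2\big)^{1/2}d\tau$, pulls out $\sup_\tau E^1[\qf^\F]^{1/2}$, and only then applies the $\lambda$-splitting — so the quadratic term that must be absorbed is the bulk $\hat{\MM}_p[\psi_3]$ of the quantity being transported (which sits on the left of the estimate), not a bulk of the source $\qf^\F$. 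This is exactly the mechanism of \cite{TeukolskyDHR} that your closing remark about "$L^1$-in-$e_3$ integration" gestures at, but your Cauchy--Schwarz as written puts the square on the wrong factor. Modulo repairing that step along the paper's lines, your argument coincides with the paper's proof.
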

\begin{proof} We first consider estimates for $\psi_3$. From $e_3(r\psi_3)=-\frac 2 r  \qf^\F$, we derive
\bea\label{div1}\begin{split}
\div(r^n |r\psi_3|^2 e_3)&= e_3(r^n |r\psi_3|^2)+r^n |r\psi_3|^2\div(e_3)\\
&=-n r^{n-1}|r\psi_3|^2-4r^{n-1}\qf^\F |r\psi_3|+\frac 1 2 r^n |r\psi_3|^2 \tr \pi^{(e_3)}\\
&=-(n+2) r^{n-1}|r\psi_3|^2-4r^{n-1}\qf^\F |r\psi_3| 
\end{split}
\eea
Similarly, commuting \eqref{differential-relations} with $re_4$ and $r \nabb$ we get 
\bea\label{div2}
\begin{split}
\div(r^n |re_4(r\psi_3)|^2 e_3)&=  -(n+2) r^{n-1} |re_4(r\psi_3)|^2+2r^n|re_4(r\psi_3)|re_4 (- \frac 2r   \qf^\F)+2r^n |re_4(r\psi_3)|[e_3,re_4](r\psi_3)\\
&=  -(n+4) r^{n-1}|r e_4(r\psi_3)|^2+r^n|r e_4(r\psi_3)|\left(\frac 4 r -\frac{16\varpi}{r^2}+\frac{12 e^2}{r^3}  \right) \qf^\F-4r^n|r e_4(r\psi_3)| e_4(\qf^\F), \\
\div(r^n |r\nabb(r\psi_3)|^2 e_3)&=  -(n+2) r^{n-1}|r \nabb(r\psi_3)|^2-4r^{n}|r \nabb(r\psi_3)|  \nabb(\qf^\F)
\end{split}
\eea
We add \eqref{div1} to \eqref{div2} and integrate in $\MM(\tau_1, \tau_2)$, with $n=p-4$. When integrating, in the trapping we bound:
\beaa
\int_{M_{trap}}|\qf^\F+\dk \qf^\F||\dk \psi_3|&\les& \int_{\tau_1}^{\tau_2} E[\qf](\tau)^{1/2}\big( \int_{\Si_{trap}(\tau)} |\dk \psi_3|^2\big)^{1/2}\\
&\les& \sup_{\tau\in[\tau_1, \tau_2]} E[\qf^\F](\tau)^{1/2}\int_{\tau_1}^{\tau_2}\big( \int_{\Si_{trap}(\tau)} |\dk \psi_3|^2\big)^{1/2}\\
&\les& \lambda\sup_{\tau\in[\tau_1, \tau_2]} E^1[\qf^\F](\tau)+\lambda^{-1}\hat{\MM}_p[\psi_3](\tau_1, \tau_2)
\eeaa
and for $\lambda$ big enough, the non-degenerate bulk $\hat{\MM}[\psi_3]$ can be absorbed by the left hand side of \eqref{div2}. We arrive therefore to 
\beaa
E_p[\psi_3](\tau_2)+\hat{\MM}_p[\psi_3](\tau_1, \tau_2)&\les& E_p[\psi_3](\tau_1)+ \sup_{\tau\in[\tau_1, \tau_2]} E^1[\qf^\F](\tau)
\eeaa

Similarly, using the differential relation $e_3(r\psi_1)=-\frac 2 r \qf$, we derive transport estimates for $\psi_1$:
\bea\label{transport-estimate-psi1-qf}
E_p[\psi_1](\tau_2) +\hat{\MM}_p[\psi_1](\tau_1, \tau_2)&\les & E_p[\psi_1](\tau_1)+ \sup_{\tau\in[\tau_1, \tau_2]} E[\qf](\tau)
\eea
Finally, using the differential relation $e_3(r\psi_0)=-\frac 2 r  \psi_1$, and bounding with Cauchy-Schwarz we obtain
\bea\label{transport-estimate-psi0-psi1}
E_p[\psi_0](\tau_2) +\hat{\MM}[\psi_0](\tau_1, \tau_2)&\les & E_p[\psi_0](\tau_1)+ \int_{\MM(\tau_1, \tau_2)} r^{-3+p}(|\psi_1 |^2+|\dk\psi_1|^2)
\eea
Multiplying \eqref{transport-estimate-psi0-psi1} by a constant $A$ and summing to \eqref{transport-estimate-psi1-qf}, choosing $A\ll 1$, we can absorb the integral on the right hand side of \eqref{transport-estimate-psi0-psi1} by the non-degenerate bulk of $\psi_1$. We obtain
\beaa
(E_p[\psi_0]+E_p[\psi_1])(\tau_2) +(\hat{\MM}_p[\psi_1]+\hat{\MM}_p[\psi_0])(\tau_1, \tau_2)&\les & (E_p[\psi_0]+E_p[\psi_1])(\tau_1)+ \sup_{\tau\in[\tau_1, \tau_2]} E[\qf](\tau)
\eeaa
as desired.
\end{proof}

We derive now the estimates for the terms of the main estimates involving the lower order terms.
\begin{proposition}\label{estimates-lot} With the notations above, for all $\de \le p \le 2-\de$, we have
\beaa
&&\left(\II_p[\qf, e   L_1[\qf^\F]]+ \II_p[\qf, e  L_1[\qf]]+\II^1_p[\qf^\F, e^2  L_2[\qf^\F]]\right)(\tau_1, \tau_2)  \\
                 && -e \Lambda \int_{\MM_{trap}}2 T( \qf )\c \left(L_1[\qf^\F]+e L_1[\qf] \right) - e T T \qf^\F \c T( L_2[\qf^\F] )-  e T R \qf^\F \c R( L_2[\qf^\F])- e  T \nabb \qf^\F \c \nabb( L_2[\qf^\F] )\\
&\les&e\sup_{\tau\in[\tau_1, \tau_2]} \left(E[\qf](\tau)+E^1[\qf^\F](\tau)\right)+e \left(\hat{\MM}[\psi_0]+\hat{\MM}[\psi_1]+\hat{\MM}[\psi_3]+\MM^1_p[\qf^\F]\right)(\tau_1, \tau_2)
\eeaa
\end{proposition}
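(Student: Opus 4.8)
The plan is to bound each of the three $\II$-norms and each of the four trapping integrals individually, always splitting the spacetime into the non-trapping region $\MM_{\ntrap}$ (where $r$-weights are the governing feature) and the trapping region $\MM_{trap}$ (where the bulk norms degenerate but $\Si_{trap}$ is spacelike). Throughout I would use the schematic forms $L_1[\qf]=\big[\frac{1}{r^3}\psi_1,\frac{1}{r^2}\psi_0\big]$ and $L_1[\qf^\F]=L_2[\qf^\F]=\big[\frac{1}{r^3}\psi_3\big]$ recorded at the start of this section, together with the transport relations \eqref{differential-relations}. The organizing principle is that each lower order quantity $\psi_0,\psi_1,\psi_3$ carries an extra favorable power of $r$ compared with the weights in its non-degenerate bulk $\hat{\MM}_p$, so that away from trapping the terms are absorbed by pure weight-counting, whereas in trapping one Cauchy--Schwarzes against the basic energies $E[\qf]$ and $E^1[\qf^\F]$, which control all first (resp.\ second) derivatives since the slices are spacelike there.

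First I would treat the two $0$th-order norms $\II_p[\qf,eL_1[\qf^\F]]$ and $\II_p[\qf,eL_1[\qf]]$. In $\MM_{\ntrap}$ the relevant contribution is $\int r^{1+p}|eL|^2$, and inserting the schematic forms gives integrands $e^2r^{-5+p}|\psi_3|^2$, $e^2r^{-5+p}|\psi_1|^2$, $e^2r^{-3+p}|\psi_0|^2$, each dominated (for $e$ small and $r\ge r_A$) by the $r^{-3+p}|\psi_i|^2$ term of the corresponding $\hat{\MM}_p[\psi_i]$. In $\MM_{trap}$, where the weights are bounded, I would Cauchy--Schwarz the factor $|R\qf|+r^{-1}|\qf|$ against the lower order factor, bound $\int_{\Si_{trap}}(|\Rbrev\qf|^2+r^{-2}|\qf|^2)\les E[\qf]$, extract $\sup_\tau E[\qf]$ from the time integral, and absorb the remaining time integral of $(\int_{\Si_{trap}}|\psi_i|^2)^{1/2}$ into $\hat{\MM}_p[\psi_i]$, exactly as in the proof of Proposition \ref{transport-estimates}.

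Next I would handle the first-order norm $\II^1_p[\qf^\F,e^2L_2[\qf^\F]]=\II_p[\dk\qf^\F,\dk(e^2L_2[\qf^\F])]$ together with the four trapping integrals. Here $\dk(L_2[\qf^\F])$, and likewise $T(L_2[\qf^\F])$ and $R(L_2[\qf^\F])$, produce a derivative of $\psi_3$; since $\dk=\{e_3,re_4,\dkb\}$ and both $T$ and $R$ contain $e_3$, these terms involve $e_3\psi_3$, which is deliberately absent from $\hat{\MM}_p[\psi_3]$. The key step is to eliminate it using $e_3(r\psi_3)=-\frac{2}{r}\qf^\F$ from \eqref{differential-relations}, which expresses $e_3\psi_3$ schematically as $\frac{1}{r}\psi_3+\frac{1}{r^2}\qf^\F$. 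After this substitution every lower order factor is either a $\psi_3$ with $e_4$, $\nabb$ or no derivative (absorbed by $\hat{\MM}_p[\psi_3]$ away from trapping and, in trapping, paired against the spacelike energy as above) or a copy of $\qf^\F$ (absorbed by $\MM^1_p[\qf^\F]$ in $\MM_{\ntrap}$, and in $\MM_{trap}$ by $\sup_\tau E^1[\qf^\F]$ together with the $r^{-2}|\qf^\F|^2$ term of $\MM^1_p[\qf^\F]$). For the trapping integrals the degenerate derivatives $T\qf$, $TT\qf^\F$, $TR\qf^\F$, $T\nabb\qf^\F$ are simply estimated by $E[\qf]$ and $E^1[\qf^\F]$ on the spacelike $\Si_{trap}$ and then paired, by the same Cauchy--Schwarz/supremum mechanism, against the (converted) lower order factors; the term with $\nabb(L_2[\qf^\F])$ carries no $e_3$ and is immediate.

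The step I expect to be the main obstacle is precisely this handling of the incoming derivatives of the lower order quantities in and near trapping: because $\hat{\MM}_p$ is built without an $e_3$-component, the estimate genuinely cannot close by brute force and must invoke \eqref{differential-relations} to trade each $e_3$-derivative of $\psi_0,\psi_1,\psi_3$ for undifferentiated quantities plus controlled copies of $\qf,\qf^\F$; this is exactly what forces the appearance of $\MM^1_p[\qf^\F]$ and the non-degenerate bulks $\hat{\MM}_p[\psi_i]$ on the right-hand side. A secondary bookkeeping point is that differentiating the $r$-weighted lower order terms generates $O(\ep)$ contributions through $T(r)$ and the commutators $[e_3,re_4]$, $[e_3,\dkb]$; these are of the same schematic type as the main terms and only reinforce the decay, so they are absorbed without difficulty.
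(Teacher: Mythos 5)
Your proposal is correct and follows essentially the same route as the paper: the same region-by-region splitting with pure $r$-weight counting away from trapping, the same Cauchy--Schwarz pairing against $E[\qf]$ and $E^1[\qf^\F]$ on the spacelike slices $\Si_{trap}$ with the supremum-in-$\tau$ mechanism of Proposition \ref{transport-estimates}, and the same use of the relation $e_3(r\psi_3)=-\frac{2}{r}\qf^\F$ to trade $e_3\psi_3$ (deliberately absent from $\hat{\MM}_p[\psi_3]$) for $\psi_3$ and $\qf^\F$, which is precisely what forces $\MM^1_p[\qf^\F]$ onto the right-hand side --- a substitution the paper performs only implicitly and which you correctly identify as the crux. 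The single minor imprecision is calling the $[e_3,re_4]$ commutator contributions $O(\ep)$ (their background parts are $O(1)$ with an extra power of $r^{-1}$, as in the coefficients of \eqref{div2}), but your conclusion that they are of the same schematic type with better decay and hence absorbable is exactly right.
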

\begin{proof} Consider $\II_p[\qf, e L_1[\qf^\F]]$, and recall that $L_1[\qf^\F]=\frac{1}{r^3}\psi_3$. We consider the spacetime integral in the three region in which the spacetime is divided: $\MM_{red}$, $\MM_{trap}$ and $\MM_{far}$.

 In the redshift region $\MM_{red}$, 
\beaa
\II_p[\qf, e L_1[\qf^\F]]&\simeq& e^2 \int_{\MM_{red}(\tau_1,\tau_2)}|\psi_3|^2\les e^2 \hat{\MM}_p[\psi_3](\tau_1, \tau_2) 
 \eeaa

 In the trapping region $\MM_{trap}$, we don't have issue with degeneracy, since the bulk term $\hat{M}[\psi_3]$ is not degenerate at the trapping region:
\beaa
\II_p[\qf, e L_1[\qf^\F]]&\simeq& e \int _{\tau_1}^{\tau_2}  d\tau \int_{\Si_{trap}(\tau)  }  ( |R\qf|  +r^{-1} |\qf| ) (|\psi_3|)\les e \int_{\tau_1}^{\tau_2} E[\qf](\tau)^{1/2}\big( \int_{\Si_{trap}(\tau)} |\psi_3|^2\big)^{1/2}\\
&\les& e\sup_{\tau\in[\tau_1, \tau_2]} E[\qf](\tau)+e\hat{\MM}_p[\psi_3](\tau_1, \tau_2) 
\eeaa
 In the far-away region $\MM_{far}$, 
\beaa
\II_p[\qf, eL_1[\qf^\F]](\tau_1,\tau_2)&=& e^2\int_{\MM_{far}(\tau_1,\tau_2)} r^{1+p}  |\frac{1}{r^3}\psi_3|^2 = e^2\int_{\MM_{far}(\tau_1,\tau_2)} r^{-5+p}|\psi_3|^2\\
&\les& e^2\hat{\MM}_p[\psi_3](\tau_1, \tau_2) 
\eeaa
which is absorbed by the non-degenerate bulk, which decays as $r^{-1+p}|\psi_3|^2$.

Consider $\II_p[\qf, e^2 L_1[\qf]]$ and recall that $L_1[\qf]= \Big[\frac{1}{r^3}\psi_1, \frac{1}{r^2} \psi_0\Big]$.

In the redshift region $\MM_{red}$, 
\beaa
\II_p[\qf, e^2 L_1[\qf]]&\simeq& e^4 \int_{\MM_{red}(\tau_1,\tau_2)}|\psi_1|^2+|\psi_0|^2 \les e^4\hat{\MM}_p[\psi_1](\tau_1, \tau_2) +e^4 \hat{\MM}_p[\psi_0](\tau_1, \tau_2)
 \eeaa
 In the trapping region $\MM_{trap}$, we have
\beaa
\II_p[\qf, e^2 L_1[\qf]]&\simeq& e^2 \int _{\tau_1}^{\tau_2}  d\tau \int_{\Si_{trap}(\tau)  }  ( |R\qf|    +r^{-1} |\qf| ) (|\psi_1|+|\psi_0|)\\
&\les& e^2 \int_{\tau_1}^{\tau_2} E[\qf](\tau)^{1/2}\big( \int_{\Si_{trap}(\tau)} |\psi_1|^2+|\psi_0|^2\big)^{1/2}\\
&\les& e^2\sup_{\tau\in[\tau_1, \tau_2]} E[\qf](\tau)+e^2\hat{\MM}_p[\psi_1](\tau_1, \tau_2) +e^2\hat{\MM}_p[\psi_0](\tau_1, \tau_2)
\eeaa
In the far-away region $\MM_{far}$, 
\beaa
\II_p[\qf, e^2 L_1[\qf]](\tau_1,\tau_2)&=& e^4\int_{\MM_{far}(\tau_1,\tau_2)} r^{1+p}  |\frac{1}{r^3}\psi_1, \frac{1}{r^2} \psi_0|^2 \\
&=& e^4\int_{\MM_{far}(\tau_1,\tau_2)} r^{-5+p}|\psi_1|^2+r^{-3+p} |\psi_0|\\
&\les& e^4\hat{\MM}_p[\psi_1](\tau_1, \tau_2) +e^4\hat{\MM}_p[\psi_0](\tau_1, \tau_2)
\eeaa
which is absorbed by the non-degenerate bulks, which decay as $r^{-1+p}(|\psi_1|^2+|\psi_0|^2)$.

Consider $\II^1_p[\qf^\F, e^2 L_2[\qf^\F]]$, and recall that $L_2[\qf^\F]=\frac{1}{r^3}\psi_3$. We have in the redshift region $\MM_{red}$, 
\beaa
\II^1_p[\qf^\F, e^2L_2[\qf^\F]]&\simeq& e^4\int_{\MM_{red}(\tau_1,\tau_2)}|\dk\psi_3|^2\les e^4\hat{\MM}_p[\psi_3](\tau_1, \tau_2) +e^4 \MM^1_p[\qf^\F](\tau_1, \tau_2)
 \eeaa
In the trapping region $\MM_{trap}$, we have
\beaa
\II^1_p[\qf^\F, e^2 L_2[\qf^\F]]&\simeq&e^2\int _{\tau_1}^{\tau_2}  d\tau \int_{\Si_{trap}(\tau)  }  ( |R\dk\qf^\F|  +r^{-1} |\dk\qf^\F| ) (|\dk\psi_3|)\\
&\les& e^2\int_{\tau_1}^{\tau_2} E^1[\qf^\F](\tau)^{1/2}\big( \int_{\Si_{trap}(\tau)} |\dk\psi_3|^2\big)^{1/2}\\
&\les& e^2\sup_{\tau\in[\tau_1, \tau_2]} E^1[\qf^\F](\tau)+e^2 \hat{\MM}_p[\psi_3](\tau_1, \tau_2) +e^4 \MM^1_p[\qf^\F](\tau_1, \tau_2)
\eeaa
In the far-away region $\MM_{far}$, 
\beaa
\II^1_p[\qf^\F, e^2 L_2[\qf^\F]](\tau_1,\tau_2)&=&  e^4\int_{\MM_{far}(\tau_1,\tau_2)} r^{-5+p}|\dk\psi_3|^2 \les e^4\hat{\MM}_p[\psi_3](\tau_1, \tau_2) +e^4 \MM^1_p[\qf^\F](\tau_1, \tau_2)
\eeaa
For the remaining terms we have similarly
\beaa
-e \Lambda \int_{\MM_{trap}}2 T( \qf )\c \left(L_1[\qf^\F]+e L_1[\qf] \right) &\les& e\sup_{\tau\in[\tau_1, \tau_2]} E[\qf](\tau)+e^2\hat{\MM}_p[\psi_1](\tau_1, \tau_2) \\
&&+e^2\hat{\MM}_p[\psi_0](\tau_1, \tau_2)+e\hat{\MM}_p[\psi_3](\tau_1, \tau_2), 
\eeaa
\beaa
&&e^2 \Lambda \int_{\MM_{trap}}  T T \qf^\F \c T( L_2[\qf^\F] )+ T R \qf^\F \c R( L_2[\qf^\F])+  T \nabb \qf^\F \c \nabb( L_2[\qf^\F] )\\
&\les& e^2\sup_{\tau\in[\tau_1, \tau_2]} E^1[\qf^\F](\tau)+e^2\hat{\MM}_p[\psi_3](\tau_1, \tau_2)
\eeaa
and combining all the above, we obtain the desired estimate.
\end{proof}

\begin{corollary}\label{corollary-lot} With the notations above, for all $\de \le p \le 2-\de$, we have
\beaa
&&\left(\II_p[\qf, e   L_1[\qf^\F]]+ \II_p[\qf, e  L_1[\qf]]+\II^1_p[\qf^\F, e^2  L_2[\qf^\F]]\right)(\tau_1, \tau_2)  \\
                 && -e \Lambda \int_{\MM_{trap}}2 T( \qf )\c \left(L_1[\qf^\F]+e L_1[\qf] \right) - e T T \qf^\F \c T( L_2[\qf^\F] )-  e T R \qf^\F \c R( L_2[\qf^\F])- e  T \nabb \qf^\F \c \nabb( L_2[\qf^\F] )\\
&\les&e\sup_{\tau\in[\tau_1, \tau_2]} \left(E[\qf](\tau)+E^1[\qf^\F](\tau)\right)+e \MM^1_p[\qf^\F](\tau_1, \tau_2)+ \left(E_p[\psi_0]+E_p[\psi_1]+E_p[\psi_3]\right)(\tau_1)
\eeaa
\end{corollary}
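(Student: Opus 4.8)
The plan is to combine Proposition \ref{estimates-lot} with the transport estimates of Proposition \ref{transport-estimates}. The left-hand side of the corollary is verbatim the left-hand side of Proposition \ref{estimates-lot}, so it suffices to re-express the right-hand side of that proposition --- which involves the non-degenerate bulks $\hat{\MM}_p[\psi_0]$, $\hat{\MM}_p[\psi_1]$, $\hat{\MM}_p[\psi_3]$, together with $e\sup_{\tau}\big(E[\qf]+E^1[\qf^\F]\big)$ and $e\,\MM^1_p[\qf^\F]$ --- in terms of initial energies alone. The last two terms already appear on the right-hand side of the corollary, so only the three non-degenerate bulks require treatment.

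First I would invoke the second transport estimate of Proposition \ref{transport-estimates} to bound
\beaa
e\,\hat{\MM}_p[\psi_3](\tau_1, \tau_2) &\les& e\,E_p[\psi_3](\tau_1) + e \sup_{\tau\in[\tau_1, \tau_2]} E^1[\qf^\F](\tau),
\eeaa
and then the first transport estimate to bound the remaining two bulks simultaneously,
\beaa
e\left(\hat{\MM}_p[\psi_0]+\hat{\MM}_p[\psi_1]\right)(\tau_1, \tau_2) &\les& e\left(E_p[\psi_0]+E_p[\psi_1]\right)(\tau_1) + e \sup_{\tau\in[\tau_1, \tau_2]} E[\qf](\tau).
\eeaa
The essential structural point is that the differential relations \eqref{differential-relations} deliver control of the full \emph{non-degenerate} bulks $\hat{\MM}_p[\psi_j]$, with no loss at the photon sphere; this is precisely what is needed, since $\psi_0$, $\psi_1$, $\psi_3$ enter Proposition \ref{estimates-lot} through norms that do not see the Morawetz degeneracy. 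It is also the reason the transport mechanism, rather than a further wave estimate, is used for the lower order quantities.

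Substituting these two bounds into the right-hand side of Proposition \ref{estimates-lot}, the supremum contributions produced by the transport estimates merge with the $e\sup_{\tau}\big(E[\qf]+E^1[\qf^\F]\big)$ already present, the term $e\,\MM^1_p[\qf^\F]$ is left untouched, and what remains are the initial-data contributions $e\left(E_p[\psi_0]+E_p[\psi_1]+E_p[\psi_3]\right)(\tau_1)$. Finally, in the small-charge regime $e\ll\varpi$ the prefactor $e$ is bounded by a constant, so $e\,E_p[\psi_j](\tau_1)\les E_p[\psi_j](\tau_1)$, which yields the stated initial-data terms without the charge prefactor. This is an essentially routine assembly of the two preceding propositions; the only point requiring care is the bookkeeping of the powers of $e$, but since every lower order contribution in Proposition \ref{estimates-lot} already carried at least one factor of the charge, the final estimate is at worst linear in $e$ and no term exceeds what is claimed.
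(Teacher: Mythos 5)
Your proposal is correct and follows exactly the paper's intended argument: the paper proves Corollary \ref{corollary-lot} precisely as a straightforward consequence of Proposition \ref{estimates-lot} combined with the transport estimates of Proposition \ref{transport-estimates}, substituting the bounds on $\hat{\MM}_p[\psi_0]$, $\hat{\MM}_p[\psi_1]$, $\hat{\MM}_p[\psi_3]$ and absorbing the resulting supremum terms. Your additional remarks on the non-degeneracy of the transport bulks and the bookkeeping of the charge prefactor (dropping $e$ in front of the initial-data terms since the quasi-local charge is bounded) are accurate and consistent with the paper.
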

\begin{proof} Straightforward consequence of Proposition \ref{estimates-lot} and the transport estimates in Proposition \ref{transport-estimates}.
\end{proof}

\begin{flushleft}
\small{DEPARTMENT OF MATHEMATICS, COLUMBIA UNIVERSITY} \\
\textit{E-mail address}: \href{mailto:egiorgi@math.columbia.edu}{egiorgi@math.columbia.edu}
\end{flushleft}

\end{document}